\providecommand{\ceil}[1]{\left \lceil #1 \right \rceil }
\newsavebox{\mybox}
\newcommand*\dif{\mathop{}\mathrm{d}}
\newcommand*\ts{T_{\text{s}}}
\newtheorem{lemma}{Lemma}
\newtheorem{prop}{Proposition}
\def\BibTeX{{\rm B\kern-.05em{\sc i\kern-.025em b}\kern-.08em
    T\kern-.1667em\lower.7ex\hbox{E}\kern-.125emX}}
\begin{document}
\title{Energy Efficient Sampling Policies for Edge Computing Feedback Systems}

\author{Vishnu Narayanan Moothedath,~\IEEEmembership{}
        Jaya Prakash Champati,~\IEEEmembership{}
        and~James Gross~\IEEEmembership{}
        
\IEEEcompsocitemizethanks{\IEEEcompsocthanksitem Vishnu Narayanan Moothedath and James Gross are with the Department of Intelligent Systems, KTH Royal Institute of Technology, Malvinas Väg 10, Stockholm 11428, Sweden. E-mail: \{vnmo,jamesgr\}@kth.se
\IEEEcompsocthanksitem Jaya Prakash Champati is with the Edge Networks Group, IMDEA Networks Institute, Avda. del Mar Mediterraneo 22, 28918 Leganes (Madrid), Spain. E-mail: jaya.champati@imdea.org}%
}


\IEEEtitleabstractindextext{%
\begin{abstract}
We study the problem of finding efficient sampling policies in an edge-based feedback system, where sensor samples are offloaded to a back-end server that processes them and generates feedback to a user. Sampling the system at maximum frequency results in the detection of events of interest with minimum delay but incurs higher energy costs due to the communication and processing of redundant samples. On the other hand, lower sampling frequency results in higher delay in detecting the event, thus increasing the idle energy usage and degrading the quality of experience. We quantify this trade-off as a weighted function between the number of samples and the sampling interval. We solve the minimisation problem for exponential and Rayleigh distributions, for the random time to the event of interest. We prove the convexity of the objective functions by using novel techniques, which can be of independent interest elsewhere. We argue that adding an initial offset to the periodic sampling can further reduce the energy consumption and jointly compute the optimum offset and sampling interval. We apply our framework to two practically relevant applications and show energy savings of up to $36\%$ when compared to an existing periodic scheme.
\end{abstract}
\begin{IEEEkeywords}
Energy minimisation, optimal sampling, edge computing, feedback systems, event detection, cyber physical systems, video analytics systems
\end{IEEEkeywords}}

\maketitle
\IEEEdisplaynontitleabstractindextext
\IEEEpeerreviewmaketitle

\IEEEraisesectionheading{\section{Introduction}\label{sec:intro}}
\IEEEPARstart{W}{ith} the advent of next-generation mobile networks such as 5G Release 15 and 16, there is an increasing interest in realising various real-time services and applications.
Perhaps most prominently, this materialises with the Release 16 features of URLLC (ultra-reliable low latency communication) targeting sub-millisecond end-to-end delays primarily for industrial automation applications.
However, in addition to these extreme use cases, a plethora of new applications are arising that all process states of reality and accurately provide feedback either to devices or humans.
Examples of such feedback systems with low latency requirements are human-in-the-loop applications like augmented reality, wearable cognitive assistants (WCA), or ambient safety.
Also, in the domain of cyber-physical systems (CPS), such applications are prominent, for example, in the context of automated video surveillance or distributed control systems.
All these applications have in common that feedback depends on state capture and timely processing, whereas essential state changes are random events and hence an efficient operation of the application becomes a central aspect of the system.
This is even more emphasised by the recent trend to place most of the processing logic of such feedback systems with edge computing facilities, leveraging supposedly ubiquitous real-time compute capabilities with the additional costs of offloading compute tasks (in terms of communication delays and energy consumption). 

In this paper, we study approaches that enable capturing the relevant system changes in edge-based feedback systems while striking a balance with the total energy consumption. 
We consider feedback systems that monitor a process (or human activities) via sampling, while only reacting to a sub-set of samples, referred to as \textit{essential events}, that lead to a system change -- for instance, a new augmentation towards a human user, an alarm in a surveillance system, or an actuation or fault detection in a general CPS set-up. 
After an essential event is captured and processed (including the generation of feedback), the feedback system transits to the next state where it starts monitoring for an essential next event to happen.
The trade-off that we study relates to the strategy applied to sample the process. 
More frequent sampling leads to a timely capture of the essential event. 
However, it also leads to the capture of unimportant samples of the process, wasting system resources in terms of energy, communication bandwidth, and computing cycles.
We are interested in mathematically characterising this trade-off between detection delay and energy usage and studying the implied consequences in the system design.

\subsection{Related Works}
Some of the earliest ideas on detecting the relevant system changes (or essential events) come from control systems with applications in manufacturing and observable models \cite{Shewhart}. The survey \cite{Gertler} looks at some important works that focus on failure events in complex systems. This is viewed from a statistical perspective in \cite{page} and is later revisited in \cite{veeravalli2012quickest} where the authors look for stochastic changes in the system with an aim of quickest detection. Moving on to the recent research on edge computing and CPS, some important works use data mining and post-processing to detect events from the collected data \cite{dataMining}. We, on the other hand, are interested in detecting live events and generating corresponding feedback to dictate the process. 
Examples of such an event detection occur in real-time video analytics systems which are explored extensively in recent years from multiple perspectives \cite{ananthanarayananreal,edgebox}. 
However, the aspect of energy consumption is not considered in the above works. Instead of using minimal energy, they look only at detecting the relevant system changes as fast as possible. 

Wireless video surveillance, where the video frames are captured by sensors and are sent to the processing node over wireless sensor networks (WSN) is studied in \cite{videoSurveillanceSurvey}. 
The authors discuss the challenges faced by these systems including the energy consumption -- that we are particularly interested in from a general event detection perspective.
The authors discuss some of the adopted methods to reduce energy consumption. These methods include optimising sensor topologies \cite{topology}, optimising video coding and transmission techniques \cite{VidCoding}, and forcing node cooperation between multiple sensors \cite{NodeCooperation}. 
In a similar context of object detection and tracking, \cite{energyefficientobjectdeteciton} discusses the energy saving by sending the camera to an idle state where the frames are dropped for a duration determined adaptively based on the speed of the object. Energy-efficient surveillance and tracking using a network of sensors are also discussed in \cite{smartSleeping}, where only a subset of the sensors are activated at any given time.
A different but widely studied method to save energy is offloading the sensor data \cite{offloading_MEC1,offloading_MEC2}. The drawback of an increased latency compounded on a large number of samples during an offloading is addressed (to a certain extent) by making offloading decisions \cite{survey} for the samples. This includes binary decisions \cite{JayaLiang_SemiOnlineAlgos,JayaLiang_singleRestart,offloading_binary1,offloading_binary2,offloading_binary3,offloading_binary4}, partial offloading decisions \cite{energyAwareOffloading,offloading_partial1,offloading_partial2}, and stochastic decisions \cite{offloading_stochastic}. 

\subsection{Contributions}
In this paper, we focus on detecting the essential events of an edge-based feedback system in real-time in an energy-efficient manner. 
In contrast to all the above-mentioned works, our approach minimises energy consumption by reducing the amount of data generated by the sensors thereby reducing the total amount of data in the communication and processing pipeline. 
While some of the existing works reduce energy by lightly processing and filtering the samples before transmitting, or taking a cluster wise sensing decision, we look at statistically determining the optimum sampling or sensing time instances. 
While the idea of offloading tends to concentrate on the sensor side and take a hit at total energy consumption by shifting the energy usage to the edge device, our work provides a framework for reducing the total energy consumption in the system. 
To the best of our knowledge, our work is the first attempt to find the energy-optimal sampling points of an edge-based feedback system for capturing the relevant system changes.

For illustrating the performance of the proposed design, we focus on two relevant and practical systems that are considered in the related works. 
The first one is a general CPS that aims at fault detection. Such systems are typically characterised by exponentially distributed inter-failure intervals, a relatively small amount of data transfer, and a low-power communication technology. A basic direction on how to approach the above-mentioned trade-off for a CPS is given by the authors in \cite{OptSampling_iccSage}. 
The second system is a video analytics system (VAS). The motivation for considering the VAS comes from a WCA system, where a human task progress is monitored continuously for the detection of the task completion. The task completion time comprises of multiple system delays (communication, processing etc.) and a delay that is tied to the response time and skill of the human user. Previous works on WCA \cite{TowardsWCA} and general distribution fitting indicates that this task completion time can be modelled as a random variable following a Rayleigh distribution. Other characteristics of a VAS are large data transfer and the requirement of a high throughput communication technology, owing to the continuous sequence of video frames that needs to be transmitted to the processing node. 

Our key contributions are listed below.
\begin{itemize}
    \item We pose the problem to find the energy-minimising periodic sampling interval of an edge-based feedback system as an optimisation problem. The convexity of this problem is proved for certain distributions by developing a novel approach that uses the Poisson sum formula and Fourier transforms -- which can be of independent academic interest elsewhere. The problem is solved using a lightweight bisection algorithm that converges exponentially to the optimum. 
    
    \item We prove that adding an initial offset to the periodic sampling further reduces the expected energy consumption when the TTE distribution is not exponential. Thus we pose a more generic optimisation problem in two variables and propose an algorithm to find a solution that achieves near-optimality.
    
    \item Using simulations, we study the energy reduction on systems with a wide range of parameterisations by considering the CPS and the VAS use-cases. Particularly for the VAS, we show the inefficiency of systems used in practice and the larger potential of energy optimisation where we obtain up to $40\%$ increase in battery life. One reason for this is the larger communication content and power requirements for the VAS, which point to the increased relevance of the proposed solution in these future systems.
    We also observe that oversampling needs to be particularly avoided as it introduces a substantial increase in energy usage.
\end{itemize}

The paper is organised as follows. In Section \ref{sec:sysmodel}, we discuss the system model. In Sections \ref{sec:analysis} and \ref{sec:analysis2}, we lay down the general solution approach and solutions for exponential and Rayleigh distributed TTEs. We discuss the numerical results in Section \ref{sec:numerical} and conclude in Section \ref{sec:conclusion}.
\section{System Model and Problem Statement}\label{sec:sysmodel}
Consider a feedback system consisting of a mobile terminal (simply terminal) and a back-end server (simply back-end), that is designed to monitor a process through sampling. The terminal captures samples that are sent to the back-end for processing. Immediately after the occurrence of an essential event (simply event), the process moves to an intermediate state where no more events are expected. The next sample drawn at or after this transition point -- referred to as a \textit{successful sample} -- indicates the event detection at the back-end's processor, and causes feedback to the terminal. The reception of this feedback triggers the start of a fresh \textit{monitoring cycle} to detect the next event. Within a monitoring cycle, the feedback is generated only to the successful sample, and all other samples are discarded by the back-end. The time taken from the start of a monitoring cycle to the event is termed as \textit{Time to event (TTE)}, and the time between this event and the corresponding feedback is termed as \textit{Time to feedback (TTF)}. 
We denote the TTE using the random variable $\mathcal{T}$, and a value for $\mathcal{T}$ is denoted by $t$.
The timing diagram of the system is given in Fig. \ref{fig:model}. Here we focus on modelling and designing the monitoring cycle corresponding to a single event, and the proposed design can be independently extended to all the (potentially predetermined set of) events that the system is supposed to detect.
\begin{figure}[ht]
\centerline{\includegraphics[width=\linewidth]{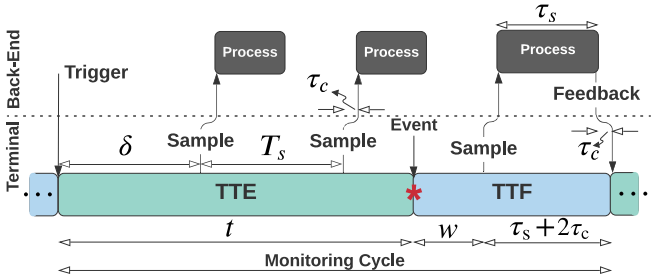}}
\caption{Timing diagram of an arbitrary monitoring cycle.}
\label{fig:model}
\end{figure}

For example, in a CPS, the event might correspond to the detection of a system failure. The feedback potentially triggers a reset and the terminal starts monitoring for the next failure. In the VAS (or WCA) system studied in \cite{ChenZhuoHu,munoz2020impact}, the user is assigned to complete a set of tasks and the events correspond to the completion of each task. The terminal takes snapshots (video frames) of the user activity using a camera and sends them to the back-end for image processing. 

The terminal can sample the process, transmit it to the back-end and can receive feedback whenever available. The back-end on the other hand is capable of receiving the sample, processing it and transmitting the feedback after an essential event detection. At all other times, the terminal and back-end go to their respective idle mode to save energy, and the power consumed while in this mode is denoted by $P_0$. 
Though the back-end can possibly serve and manage multiple terminals, we restrict our study to the back-end's interaction with a single terminal. 

Sampling the system for detecting the event is governed by a set of sampling policies ${\Pi}$. We consider sampling policies $\pi\in\Pi$ that sample the system periodically with a sampling interval $\ts$, except for the first sample, which is sampled after waiting for a duration $\delta \geq \ts$. We refer to $\delta$ as \textit{offset}. This offset takes care of a minimum time threshold, sampling before which only adds to the energy wastage. We elaborate this in detail in Section \ref{sec:analysis2}.
Apart from the set of sampling policies $\Pi$, we also consider a special case with $\delta=\ts$, referred to as $\hat{\Pi}$. In other words, under $\Pi$ the samples are taken at $t=k\ts +\delta,\;k\in\mathbb{N}$ whereas under $\hat{\Pi}$ the samples are taken at $t=k\ts ,\;k\in\mathbb{N}^+$.
While the optimum policy under $\Pi$ comprises of an optimum sampling interval $\ts ^*$ and an optimum phase $\delta^*$, the optimum policy under the analytically simpler subset $\hat{\Pi}$ corresponds to (only) an optimum sampling interval $\ts^\#$. We use the notation $(\cdot)^\#$ instead of $(\cdot)^*$ to have a clear distinction between the underlying set of policies. 
Following the same conventions, we refer to these optimum policies as $\pi^*$ and $\pi^\#$, respectively.

The number of samples taken is denoted by a random variable $\mathcal{S}$, and a value for $\mathcal{S}$ is denoted by $s$. This includes both the discarded samples taken during TTE as well as the final sample that leads to detecting the event. 
This TTF consists of a (potential) random lag $\mathcal{W}$ until the next sample is drawn as well as a deterministic processing and two-way communication delays. The random lag $\mathcal{W}$ is referred to as \textit{wait} and $w$ denotes a value of $\mathcal{W}$. Note that only the processing and communication delay corresponding to the final sample contributes to the delay after the wait. In this work, we assume that the total power consumption during transmission and reception is the same at both the terminal and the back-end and we refer to it simply as communication power denoted by $P_{\text{c}}$, which is typically much larger than the idle power $P_0$. 
We also assume that the communication delay in either direction is $\tau_{\text{c}}$ and the processing time of the successful sample be $\tau_\text{s}$.

When not performing a transmission, reception, or processing, both the terminal and the back-end enter idle mode, which incurs an idle power consumption. The total time within a state, during which the terminal or back-end is in an idle mode is referred to as the \textit{idle time} and is denoted by the random variable $\mathcal{T}_0$. 
The notations used and their meanings are reiterated in TABLE \ref{tab:notations} for readability. 
\begin{table}[htb]
\centering
    \begin{tabular}{|llll|}
    \hline
    $\mathcal{S}$&number of samples&
    $\mathcal{W}$&wait time\\
    
    $\mathcal{T}$& time to event (TTE)&
    $\mathcal{T}_0$&idle time\\
    
    ${F}_\mathcal{X}(\cdot)$&CDF $X$&
    $\bar{F}_\mathcal{X}(\cdot)$& CCDF of $\mathcal{X}$\\
    
    $\ts $&sampling interval&
    $\delta$&offset\\
    
    $\tau_{\text{c}} $&communication delay&
    $\tau_{\text{s}} $&processing delay\\

    $P_{\text{c}} $&communication power&
    $P_0$&idle power\\
    
    $\mathrm{E}$&energy&
    $\mathcal{E}$&energy penalty\\
    
    $\mathbb{N}$&\{0,1,2,\dots\}&
    $\mathbb{N}^+$&\{1,2,3,\dots\}\\
    
    \hline
    \end{tabular}
    \caption{Table of notations.}
    \label{tab:notations}
\end{table}

\subsection{Problem Statement}
An ideal sampling policy should sample the system immediately after the event so that the wait is zero and the number of samples required per event is exactly 1.
However, such a sampling policy is unattainable given the fact the TTE is randomly distributed. 
As a result, we have to settle with a sampling policy that simultaneously reduces the expected wait and expected number of discarded samples to yield the maximum attainable benefit. Note that this is not straightforward as the wait time and the number of samples shows opposite behaviour with a change in sampling frequency. For instance, an aggressive sampling reduces the wait time, but it also increases the number of samples. In this work, we use energy as a metric to quantify this opposing behaviour. Each sample warrants energy in terms of communication and processing; and during the wait $w$, energy is expended as governed by the idle power. It is this optimal trade-off of energy usage between the sampling frequency and the number of samples that we seek to quantify.

Note that the fundamental random variable here is the TTE $\mathcal{T}$ and the other random variables -- $\mathcal{S}, \mathcal{W}$ and $\mathcal{T}_0$ -- are derived from $\mathcal{T}$ through the selection of the optimization variables $\ts$ and $\delta$. We can compute the idle time denoted by $\mathcal{T}_0$ in terms of other parameters as follows:
\begin{alignat}{2}
    \mathcal{T}_0&&\ =\ &\mathcal{T}+\mathcal{W}+\tau_{\text{s}} -(\mathcal{S}-1)\tau_{\text{c}}\;. \label{eq:idleTime_Terminal}
\end{alignat}
Let $\mathrm{E}$ be the energy required for detecting one event.
\begin{alignat}{2}
    \mathrm{E}&=(\mathcal{S}+1)\tau_{\text{c}} P_{\text{c}} +\mathcal{T}_0P_0\nonumber\\
    &=\mathcal{S}\tau_{\text{c}} (P_{\text{c}} -P_0)+\mathcal{W}P_0+(\mathcal{T}+\tau_{\text{c}} +\tau_{\text{s}} ) P_0+\tau_{\text{c}} P_{\text{c}}\;. \label{eq:Energy_Terminal}
\end{alignat}\par

We aim to minimise the expected energy for a given TTE statistics. 
In \eqref{eq:Energy_Terminal}, the terms except those containing
the number of samples
$\mathcal{S}$ or
the wait
$\mathcal{W}$
are either constants or have constant expectations for a fixed distribution of $\mathcal{T}$. Hence these terms are irrelevant in the optimisation where we minimise the expected energy. Let $\mathcal{E}(\ts ,\delta)$ be the component of the total energy which is relevant for the optimisation and let us call it \textit{energy penalty}. We obtain
\begin{alignat}{1}
\mathcal{E}(\ts ,\delta)=\alpha\mathbb{E}[\mathcal{S}]+\beta\mathbb{E}[\mathcal{W}]\;,\label{eq:epsilon_terminal}
\intertext{where we use the constants $\alpha$ and $\beta$ for mathematical tractability in the upcoming sections. Here,}
\alpha=\tau_{\text{c}} (P_{\text{c}} -P_0) \;\text{ and }\;\beta=P_0\;.\label{eq:alphaBeta}
\end{alignat}
In \eqref{eq:epsilon_terminal}, $\beta\,\mathbb{E}[\mathcal{W}]$ corresponds to the additional energy expended for waiting, and $\alpha$ represents the energy wasted per discarded sample due to the additional communication and processing. The minimum values of $\mathcal{E}$ under the policies $\Pi$ and $\hat{\Pi}$ are referred to as $\mathcal{E}^*$ and $\mathcal{E}^\#$, respectively. That is
 \begin{equation*}
     \mathcal{E}^*=\mathcal{E}(\ts^*,\delta^*) \text{ and } {\mathcal{E}}^\#=\mathcal{E}(\ts^\#,\ts^\#)\;.
 \end{equation*}
 
In what follows, we study the general optimisation problems $\mathcal{P}$ and $\hat{\mathcal{P}}$ under the set of sampling policies ${\Pi}$ and $\hat{\Pi}$, respectively. These optimisation problems are defined as follows:
\begin{alignat}{2}
\begin{split}
\mathcal{P}:&\;\;\pi^*=
    \{\ts ^*,\delta^*\}=\;\underset{\{\ts ,\delta\}}{\text{arg\,min}}\ \mathcal{E}(\ts,\delta)\\
\hat{\mathcal{P}}:&\;\;\pi^\#=
    \{\ts^\#\}=\;\underset{\ts }{\text{arg\,min}}\ \mathcal{E}(\ts,\ts)\;,
\end{split}\label{eq:Ts_optimum}
\end{alignat}
 where $\pi^*$ and $\pi^\#$ are the optimum policies for $\mathcal{P}$ and $\hat{\mathcal{P}}$, respectively.
 We will solve these problems for the CPS and the VAS where the TTEs follow exponential and Rayleigh distributions, respectively.
\section{Solution to the Problem $\hat{\mathcal{P}}$}\label{sec:analysis}
Being the simpler one, we start with the optimisation problem $\hat{\mathcal{P}}$ in this section and find the optimum periodic sampling interval by minimising $\mathcal{E}(\ts,\delta)$ given in \eqref{eq:epsilon_terminal}. Throughout this section, we will drop the redundant second argument $\delta\,(=\!\ts \text{ under }\hat{\mathcal{P}})$ from the energy penalty for simplicity. As the solution is specific to the distribution of $\mathcal{T}$, we will first lay down the solution approach to find the energy penalty for any general distribution and later apply this to particular distributions.

\subsection{General TTE Distribution}
Here, we derive the expressions for the expected number of samples and the expected wait time under a general TTE distribution, which together constitutes $\mathcal{E}(\ts )$. 
\subsubsection{Expected number of samples}\label{subsubsec:general_OP}
Recall that $\mathcal{S}$ is the random number of samples taken for detecting an event. We have
\begin{alignat}{2}
    \mathbb{P}( \mathcal{S}=k)&&\ =\ &\mathbb{P}\big(\ceil{\nicefrac{t}{\ts }}=k\big),\;\forall k\geq 1\nonumber\\
    &&\ =\ & \mathbb{P}(k-1< \nicefrac{t}{\ts }\leq k)\nonumber\\
    &&\ =\ & F_{\mathcal{T}}\big(k\ts \big)-F_{\mathcal{T}}\big((k-1)\ts \big)\nonumber\\
    \Rightarrow \mathbb{E}[\mathcal{S}]&&\ =\ &\sum_{k=1}^{\infty}k\Big(F_{\mathcal{T}}\big(k\ts \big)-F_{\mathcal{T}}\big((k-1)\ts \big)\Big)\;.\label{eq:EOP_general}
\end{alignat}
\subsubsection{Expected wait time}\label{subsubsec:general_WP}
Recall that w denotes a value of the random wait time $\mathcal{W}$. 
We have a fixed set of sampling instances governed by $\ts $. The CDF of the wait $F_\mathcal{W}(w)$ can be obtained by taking the probability of the TTE to fall at most $w$ short of any sampling instance. Even though the TTEs are finite in practice, we consider they can be arbitrarily large for the sake of generalised analysis. As a result, a successful sample can be located anywhere from the first sampling instance to possibly infinity. For real systems, however, there is an upper bound for the TTE (preemption or otherwise) and it implies that the corresponding probability is zero beyond this point. We can compute the CDF of $\mathcal{W}$ as follows.
\begin{alignat}{2}
F_\mathcal{W}(w)&&\ =\ &\sum_{k=1}^{\infty}\mathbb{P}\big(k\ts -w<t\leq k\ts \big)\nonumber\\
&&\ =\ &\sum_{k=1}^{\infty}\big(F_{\mathcal{T}}(k\ts )-F_{\mathcal{T}}(k\ts -w)\big)\;.\label{eq:CDF_w_general}
\intertext{Since $\mathcal{W}$ is a non-negative random variable, we have}
\mathbb{E}[\mathcal{W}]&&\ =\ &{\int_{0}^{\infty}}\big(1-F_\mathcal{W}(w)\big)\,\mathrm{d}w\nonumber\\
&&\ =\ & \int_{0}^{\ts }\Big(1-\sum_{k=1}^{\infty}\big(F_{\mathcal{T}}(k\ts )-F_{\mathcal{T}}(k\ts -w)\big)\Big)\,\mathrm{d}w\;. \nonumber
\intertext{Here, integral limits and infinite sum are finite, and the summand is non-negative. Thus}
\mathbb{E}[\mathcal{W}]&&\ =\ &\ts -\sum_{k=1}^{\infty}\int_{0}^{\ts }\big(F_{\mathcal{T}}(k\ts )-F_{\mathcal{T}}(k\ts -w)\big)\,\mathrm{d}w\;.\label{eq:Ewp_general}
\end{alignat}
\par

Using \eqref{eq:EOP_general} and \eqref{eq:Ewp_general}, we can calculate the energy penalty given in \eqref{eq:epsilon_terminal}.
Clearly, the expected number of samples decrease with an increase in $\ts $. We can see that the summation containing the integral in the expected wait time expression also decrease with an increase in $\ts $, thus increasing $\mathbb{E}[\mathcal{W}]$. This opposing behaviour of the two penalties result in a minima in their weighted sum $\mathcal{E}(\ts )$ (with weights $\alpha$ and $\beta$) at the optimum sampling interval $\ts ^{\#}$. Depending on the distribution of $\mathcal{T}$ and the resultant energy penalty, $\ts ^{\#}$ can be computed using known optimisation techniques or numerical solvers \cite{boyd}. 
\subsection{Exponentially Distributed TTE}
In this subsection, we look into a feedback system where the TTEs are exponentially distributed. This part of the work describing about the solution of $\hat{\mathcal{P}}$ under an exponentially distributed TTE was published earlier by the authors in \cite{OptSampling_iccSage}. 
\begin{lemma}\label{lemma_totPenalty_exponential}
For exponentially distributed TTE with mean $\nicefrac{1}{\lambda}$, the energy penalty $\mathcal{E} (\ts )$ imparted by a periodic sampling policy with a period $\ts $ is given by 
\begin{equation}
    \mathcal{E} (\ts )=\dfrac{\alpha\lambda+\beta\big(e^{-\lambda \ts }+\lambda \ts -1\big)}{\lambda (1-e^{-\lambda \ts })}\;.\label{eq:totPen_exponential}
\end{equation}
\end{lemma}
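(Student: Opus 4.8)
The plan is to evaluate the two expectations appearing in \eqref{eq:epsilon_terminal} for the exponential CDF $F_{\mathcal{T}}(t)=1-e^{-\lambda t}$ and then assemble them over a common denominator. Writing $q:=e^{-\lambda\ts}$ throughout keeps the algebra compact.

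First I would handle $\mathbb{E}[\mathcal{S}]$ via \eqref{eq:EOP_general}. Substituting the exponential CDF gives $F_{\mathcal{T}}(k\ts)-F_{\mathcal{T}}((k-1)\ts)=q^{k-1}(1-q)$, so that $\mathbb{P}(\mathcal{S}=k)=(1-q)q^{k-1}$. This is precisely a geometric distribution --- a manifestation of the memoryless property of the exponential TTE, since each sampling interval independently ``succeeds'' with probability $1-q$. Hence $\mathbb{E}[\mathcal{S}]=(1-q)^{-1}=(1-e^{-\lambda\ts})^{-1}$ with no further work.

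Next I would turn to $\mathbb{E}[\mathcal{W}]$ via \eqref{eq:Ewp_general}. The integrand becomes $F_{\mathcal{T}}(k\ts)-F_{\mathcal{T}}(k\ts-w)=q^{k}(e^{\lambda w}-1)$, which factors the $k$-dependence out of the integral. The inner integral $\int_{0}^{\ts}(e^{\lambda w}-1)\,\mathrm{d}w$ is elementary, and the remaining sum $\sum_{k\geq 1}q^{k}=q/(1-q)$ is again geometric. I expect this bookkeeping step --- combining the integral, the geometric sum, and cancelling cross terms --- to be the only place demanding care. After simplification the terms collapse to $\mathbb{E}[\mathcal{W}]=\ts/(1-e^{-\lambda\ts})-1/\lambda$.

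Finally I would substitute both expectations into $\mathcal{E}(\ts)=\alpha\mathbb{E}[\mathcal{S}]+\beta\mathbb{E}[\mathcal{W}]$ and place everything over the common denominator $\lambda(1-e^{-\lambda\ts})$. The numerator then reads $\alpha\lambda+\beta\lambda\ts-\beta(1-e^{-\lambda\ts})=\alpha\lambda+\beta(e^{-\lambda\ts}+\lambda\ts-1)$, which is exactly \eqref{eq:totPen_exponential}. The convergence of all series and the interchange of summation and integration are already justified in the general derivation, since every summand is nonnegative, so no additional analytic subtleties arise.
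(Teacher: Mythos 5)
Your proposal is correct and follows essentially the same route as the paper: substitute the exponential CDF into the general expressions of Section~\ref{sec:analysis}, recognise the resulting geometric structure, and assemble $\alpha\mathbb{E}[\mathcal{S}]+\beta\mathbb{E}[\mathcal{W}]$ over the common denominator $\lambda(1-e^{-\lambda \ts})$; your intermediate values $\mathbb{E}[\mathcal{S}]=(1-e^{-\lambda\ts})^{-1}$ and $\mathbb{E}[\mathcal{W}]=\ts/(1-e^{-\lambda\ts})-1/\lambda$ agree exactly with \eqref{eq:Eop_exponential} and \eqref{eq:Ewp_exponential}. The only (immaterial) difference is in the wait term: the paper first obtains the closed-form CDF $F_\mathcal{W}(w)=(e^{\lambda w}-1)/(e^{\lambda\ts}-1)$ from \eqref{eq:CDF_w_general} and differentiates to the density before integrating, whereas you plug directly into the tail formula \eqref{eq:Ewp_general} and evaluate the elementary integral and geometric sum.
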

\begin{proof}
Given that the TTEs are exponentially distributed with rate parameter $\lambda$, we have $F_{\mathcal{T}}(t)=1-e^{-\lambda t}\;$. From \eqref{eq:EOP_general}, the expected number of samples can be computed as
\begin{alignat}{2}
\mathbb{E}[\mathcal{S}]&&\ =\ &\dfrac{1}{1-e^{-\lambda \ts }}.\label{eq:Eop_exponential}
\intertext{Substituting $F_{\mathcal{T}}(t)$ in  \eqref{eq:CDF_w_general}}
F_\mathcal{W}(w)&&\ =\ &\sum_{k=1}^{\infty}\big((1-e^{-\lambda k\ts })-(1-e^{-\lambda (k\ts -w)})\big)\nonumber\\
&&\ =\ &\sum_{k=1}^{\infty}e^{-\lambda k\ts }(e^{\lambda w}-1)=\dfrac{e^{\lambda w}-1}{e^{\lambda \ts }-1}\nonumber\\
\Rightarrow f_\mathcal{W}(w)&&\ =\ &\dfrac{\lambda e^{\lambda w}}{e^{\lambda \ts }-1}\nonumber\\
\Rightarrow \mathbb{E}[\mathcal{W}]&&\ =\ &\dfrac{e^{-\lambda \ts }+\lambda \ts -1}{\lambda (1-e^{-\lambda \ts })}\;.\label{eq:Ewp_exponential}
\intertext{Substituting \eqref{eq:Eop_exponential} and \eqref{eq:Ewp_exponential} in \eqref{eq:Ts_optimum}, we get the energy penalty $\mathcal{E} (\ts )$ for using a particular sampling interval $\ts $ as}
\mathcal{E} (\ts )&&\ =\ &\dfrac{\alpha\lambda+\beta\big(e^{-\lambda \ts }+\lambda \ts -1\big)}{\lambda (1-e^{-\lambda \ts })}\;.\nonumber
\end{alignat}
\end{proof}
\begin{lemma}\label{lemma_totPenaltyConvexity_exponential}
The energy penalty $\mathcal{E} (\ts )$
is convex in $\ts\;$.
\end{lemma}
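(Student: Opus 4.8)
The plan is to exploit the convex-combination structure of \eqref{eq:totPen_exponential} rather than to differentiate the whole quotient at once. First I would algebraically tidy the expression: writing $e^{-\lambda\ts}+\lambda\ts-1=\lambda\ts-(1-e^{-\lambda\ts})$ in the numerator and separating the two contributions, \eqref{eq:totPen_exponential} collapses to
\begin{equation*}
\mathcal{E}(\ts)=\frac{\alpha+\beta\ts}{1-e^{-\lambda\ts}}-\frac{\beta}{\lambda}\;.
\end{equation*}
Since the additive constant $-\beta/\lambda$ is irrelevant to convexity and the weights $\alpha=\tau_{\text{c}}(P_{\text{c}}-P_0)>0$ and $\beta=P_0>0$ are positive, it suffices to show that the two functions $g_1(\ts)=\bigl(1-e^{-\lambda\ts}\bigr)^{-1}$ and $g_2(\ts)=\ts\bigl(1-e^{-\lambda\ts}\bigr)^{-1}$ are \emph{individually} convex on $\ts>0$; then $\mathcal{E}$ is a nonnegative combination of convex functions plus a constant, hence convex.

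The term $g_1$ is the easy one. Setting $g(\ts)=1-e^{-\lambda\ts}$, one checks $g>0$, $g'=\lambda e^{-\lambda\ts}>0$ and $g''=-\lambda^2 e^{-\lambda\ts}<0$, so $g$ is positive and concave. The standard identity $(1/g)''=\bigl(2(g')^2-g\,g''\bigr)/g^3$ then has a strictly positive numerator, giving $g_1''>0$.

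The real work is $g_2$, which is where I expect the main obstacle. Its natural geometric expansion $g_2=\sum_{k\ge 0}\ts\,e^{-k\lambda\ts}$ is \emph{not} convex term by term, so I would instead reduce everything to a single scalar inequality. After the affine substitution $x=\lambda\ts$ (which preserves convexity), it is enough to prove that $\psi(x)=x/(1-e^{-x})$ is convex. A direct differentiation yields
\begin{equation*}
\psi''(x)=\frac{e^{-x}(x-2)+e^{-2x}(x+2)}{(1-e^{-x})^3}\;,
\end{equation*}
whose denominator is positive for $x>0$, so convexity of $\psi$ is equivalent to the claim $q(x):=(x-2)+e^{-x}(x+2)\ge 0$ for $x\ge 0$ (obtained by multiplying the numerator by $e^{x}$). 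This last inequality I would settle by a nested monotonicity argument: a short computation gives $q(0)=0$, $q'(0)=0$, and $q''(x)=x\,e^{-x}\ge 0$. Hence $q'$ is nondecreasing and starts at $0$, so $q'\ge 0$; therefore $q$ is nondecreasing and starts at $0$, so $q\ge 0$. This establishes $\psi''\ge 0$, so $g_2$ is convex, and the lemma follows. The only delicate point is recognising that term-wise convexity fails and that the problem must be pushed down to the scalar inequality $q\ge 0$, which is then handled by differentiating twice and integrating the sign back up.
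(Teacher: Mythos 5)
Your proof is correct, but it takes a genuinely different route from the paper's. The paper differentiates the quotient \eqref{eq:totPen_exponential} twice head-on and reduces convexity to the sign of the bracket $\Tilde{\mathcal{E}}=(1+e^{-\lambda \ts })(\lambda \ts +\tfrac{\alpha}{\beta}\lambda)+2e^{-\lambda \ts }-2$, which it bounds below by $e^{\lambda \ts }+e^{-\lambda \ts }-2\geq 0$ via an inner argument: $\Tilde{\mathcal{E}}$ is itself convex, so it suffices to evaluate it at its stationary point; the ratio $\alpha/\beta$ stays inside the key inequality throughout. You instead collapse $\mathcal{E}$ to $\frac{\alpha+\beta \ts }{1-e^{-\lambda \ts }}-\frac{\beta}{\lambda}$ and prove the two pieces $g_1=(1-e^{-\lambda \ts })^{-1}$ and $g_2=\ts (1-e^{-\lambda \ts })^{-1}$ convex separately --- $g_1$ as the reciprocal of a positive concave function, $g_2$ via the parameter-free scalar inequality $q(x)=(x-2)+e^{-x}(x+2)\geq 0$, integrated up from $q(0)=q'(0)=0$ and $q''(x)=xe^{-x}\geq 0$; your expression for $\psi''$ and the nested-monotonicity step both check out, as does your observation that the geometric expansion of $g_2$ is not term-wise convex, so the splitting cannot be pushed further. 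Your decomposition buys two things: the decisive inequality no longer involves $\alpha/\beta$ at all, and since $g_1=\mathbb{E}[\mathcal{S}]$ and $g_2=\mathbb{E}[\mathcal{W}]+\tfrac{1}{\lambda}$ by \eqref{eq:Eop_exponential} and \eqref{eq:Ewp_exponential}, you in fact show that the expected number of samples and the expected wait are \emph{each individually} convex, hence every nonnegative weighting of them is convex --- slightly more than the lemma asserts. The paper's route, in exchange, yields the explicit derivatives $\mathcal{E}'$ and $\mathcal{E}''$, and its inner stationarity condition is precisely the optimality equation of Proposition \ref{prop:exponentialSolution}, so its computations are reused directly in solving $\hat{\mathcal{P}}$. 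Both arguments implicitly require $\alpha\geq 0$ (i.e.\ $P_{\text{c}}\geq P_0$), and both ultimately rest on the same device of pushing the sign question down two derivatives and integrating it back up.
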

\begin{proof}
In the following, we drop the argument $\ts$ from $\mathcal{E} (\ts )$ for simplicity in presentation. Note from \eqref{eq:alphaBeta} that $\beta>0$.
\begin{alignat}{3}
     \mathcal{E}'\ :=\ &\dfrac{\dif\mathcal{E}}{\dif \ts }\ =\ &&\dfrac{\beta\big(1-e^{-\lambda \ts }(\lambda \ts +\tfrac{\alpha}{\beta}\lambda+1)\big)}{(1-e^{-\lambda \ts })^2}\;,\nonumber\\
     \mathcal{E}''\ :=\ &\dfrac{\dif ^2\mathcal{E}}{\dif \ts ^2}\ =\ &&\dfrac{\beta\lambda e^{-\lambda \ts }\big((1+e^{-\lambda \ts })(\lambda \ts +\tfrac{\alpha}{\beta}\lambda)+2e^{-\lambda \ts }-2 \big)}{(1-e^{-\lambda \ts })^3}\;.\nonumber
\end{alignat}
\begin{alignat}{2}
\mathcal{E}''\ \geq 0 \Rightarrow\ &\Tilde{\mathcal{E}}\ &:=\ &(1+e^{-\lambda \ts })(\lambda \ts +\tfrac{\alpha}{\beta}\lambda)+2e^{-\lambda \ts }-2 \geq 0\;,\nonumber\\
&\Tilde{\mathcal{E}}'\ &:=\ &\dfrac{\dif \Tilde{\mathcal{E}}}{\dif \ts }\ =\ \lambda(1-e^{-\lambda \ts }(\lambda \ts +\tfrac{\alpha}{\beta}\lambda+1))\;,\nonumber\\
&\Tilde{\mathcal{E}}''\ &:=\ &\dfrac{\dif ^2\Tilde{\mathcal{E}}}{\dif \ts ^2}\ =\ \lambda^2e^{-\lambda \ts }(\lambda \ts +\tfrac{\alpha}{\beta}\lambda)\;.\nonumber
\end{alignat}
From the above expressions, since $\Tilde{\mathcal{E}}''\geq0\ \forall \ts $, we can conclude that $\Tilde{\mathcal{E}}$ is globally convex and any infimum point is its minimum. To find this infimum:
\begin{align*}
 \Tilde{\mathcal{E}}'=0&\Rightarrow \lambda \ts +\tfrac{\alpha}{\beta}\lambda= e^{\lambda \ts }-1\;.
\end{align*}
Substituting in the above expression for $\Tilde{\mathcal{E}}$, we obtain 
\begin{align*}
\underset{\ts >0}{\min}\{\Tilde{\mathcal{E}}\} =& (1+e^{-\lambda \ts })(e^{\lambda \ts }-1)+2e^{-\lambda \ts }-2\\
=&e^{\lambda \ts }+e^{-\lambda \ts }-2\\
\geq&0\ \forall \ts >0\\
    \Rightarrow \Tilde{\mathcal{E}}\geq& 0\ \forall \ts >0\\
    \Rightarrow \mathcal{E}''\geq& 0\ \forall \ts >0\;.
\end{align*}
As $\ts $ is non-negative, the energy penalty $\mathcal{E}$ is convex.
\end{proof}
\begin{prop}\label{prop:exponentialSolution}
The optimum sampling interval  $\ts ^{\#}$ under an exponentially distributed TTE with mean $\nicefrac{1}{\lambda}$ is the solution to the expression 
\begin{equation}
e^{\lambda \ts ^{\#}}-\lambda \ts ^{\#} =\tfrac{\alpha}{\beta}\lambda+1\;.\label{lemma_Ts*_exponential}
\end{equation}
\end{prop}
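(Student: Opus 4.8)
The plan is to exploit the convexity established in Lemma~\ref{lemma_totPenaltyConvexity_exponential}. Since $\mathcal{E}(\ts)$ is convex on $(0,\infty)$, any interior stationary point is necessarily the global minimiser, so it suffices to characterise the point at which the first derivative vanishes. First I would reuse the expression for the first derivative already computed in the proof of that lemma, namely
\begin{equation*}
\mathcal{E}'=\dfrac{\beta\big(1-e^{-\lambda \ts }(\lambda \ts +\tfrac{\alpha}{\beta}\lambda+1)\big)}{(1-e^{-\lambda \ts })^2}\;.
\end{equation*}
Because the denominator $(1-e^{-\lambda \ts })^2$ is strictly positive for every $\ts >0$ and $\beta=P_0>0$, the stationarity condition $\mathcal{E}'=0$ reduces to forcing the bracketed numerator to vanish.

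Next I would set that numerator to zero and rearrange. Starting from $1=e^{-\lambda \ts }(\lambda \ts +\tfrac{\alpha}{\beta}\lambda+1)$ and multiplying through by $e^{\lambda \ts }$ yields $e^{\lambda \ts }=\lambda \ts +\tfrac{\alpha}{\beta}\lambda+1$, i.e.
\begin{equation*}
e^{\lambda \ts }-\lambda \ts =\tfrac{\alpha}{\beta}\lambda+1\;,
\end{equation*}
which is precisely \eqref{lemma_Ts*_exponential} with the minimiser denoted $\ts^{\#}$. This is the entire algebraic content of the statement.

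The only point requiring care — and what I would treat as the main, though mild, obstacle — is confirming that this transcendental equation admits a unique root in the open interval $(0,\infty)$, so that the minimiser is genuinely interior rather than at a boundary of the feasible set. To that end I would define $g(\ts):=e^{\lambda \ts }-\lambda \ts $ and observe that $g(0)=1$ while $g'(\ts)=\lambda(e^{\lambda \ts }-1)>0$ for all $\ts >0$, so $g$ increases strictly from $1$ to $+\infty$. Since $\alpha=\tau_{\text{c}}(P_{\text{c}}-P_0)>0$, $\beta=P_0>0$ and $\lambda>0$, the right-hand side $\tfrac{\alpha}{\beta}\lambda+1$ strictly exceeds $g(0)=1$; by the intermediate value theorem together with the strict monotonicity of $g$, there is exactly one $\ts^{\#}>0$ satisfying the equation. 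Combining this existence-and-uniqueness fact with the convexity from Lemma~\ref{lemma_totPenaltyConvexity_exponential} confirms that $\ts^{\#}$ is the unique global minimiser of $\mathcal{E}$, which completes the argument.
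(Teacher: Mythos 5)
Your proposal is correct and follows essentially the same route as the paper: invoke the convexity from Lemma~\ref{lemma_totPenaltyConvexity_exponential}, set the first derivative of $\mathcal{E}(\ts)$ to zero, and rearrange the vanishing numerator into $e^{\lambda \ts}-\lambda \ts=\tfrac{\alpha}{\beta}\lambda+1$. Your added existence-and-uniqueness argument via the strict monotonicity of $g(\ts)=e^{\lambda \ts}-\lambda \ts$ is a small but worthwhile refinement; the paper only remarks informally after the proof that $e^x-x$ is increasing with $e^x-x\geq 1$ for $x\geq 0$, so it can be solved numerically.
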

\begin{proof}
The proof follows from Lemma \ref{lemma_totPenalty_exponential} and Lemma \ref{lemma_totPenaltyConvexity_exponential}. We can find the optimum by equating the first derivative of total penalty to zero.
 \begin{align}
     \dfrac{\dif \mathcal{E} (\ts )}{\dif \ts }=0 &\Rightarrow 1-e^{-\lambda \ts }(\lambda \ts +\tfrac{\alpha}{\beta}\lambda+1)=0\nonumber\\
     \Rightarrow e^{\lambda \ts }-\lambda \ts  &=\tfrac{\alpha}{\beta}\lambda+1\;.\nonumber
 \end{align}
 
 \end{proof}
We know that $e^x-x$ is a monotonically increasing convex function in $x$ with $e^x-x\geq 1,\ \forall x\geq 0$. Hence, this single variable expression can be solved using well-known numerical solvers. \par

Though the value of energy penalty depends on the values of $\beta$ and $\alpha$, the optimum sampling interval $\ts ^{\#}$ only depends on their ratio. Furthermore, from \eqref{eq:alphaBeta}, it can be seen that this ratio $\nicefrac{\beta}{\alpha}$ does not depend on the individual power figures but only on the percentage additional power necessary for communication or processing when compared to their respective idle power requirement. In other words, for fixed $\lambda$, $\nicefrac{\beta}{\alpha}$ and thus the optimum sampling interval is a function of only $\tau_{\text{c}} $ and $\nicefrac{P_{\text{c}} }{P_0}$.

\subsection{Rayleigh Distributed TTE}
We will now consider a Rayleigh distributed TTE with mean $\mu\!=\!\sigma\sqrt{\tfrac{\pi}{2}}$ and CDF
$F_\mathcal{T}(t)\!=\!1-e^{-t^2/2\sigma ^2}$; where $\sigma$ is the scale parameter. 

\begin{lemma}
The expected number of samples is given by
\begin{equation}
\mathbb{E}[\mathcal{S}] = \sum_{k=0}^{\infty}e^{-k^2\ts ^2/2\sigma^2}\;.\label{ExpSamples_Rayleigh}
\end{equation}
\end{lemma}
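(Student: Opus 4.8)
The plan is to specialise the general expression for $\mathbb{E}[\mathcal{S}]$ in \eqref{eq:EOP_general} to the Rayleigh CDF and then collapse the resulting series by a summation-by-parts (Abel) argument. Substituting $F_\mathcal{T}(t)=1-e^{-t^2/2\sigma^2}$ yields the per-term difference $F_\mathcal{T}(k\ts)-F_\mathcal{T}\big((k-1)\ts\big)=e^{-(k-1)^2\ts^2/2\sigma^2}-e^{-k^2\ts^2/2\sigma^2}$. Writing $a_k:=e^{-k^2\ts^2/2\sigma^2}$ (so that $a_0=1$), the expected count takes the telescoping form $\mathbb{E}[\mathcal{S}]=\sum_{k=1}^{\infty}k\,(a_{k-1}-a_k)$, and the goal is to show this equals $\sum_{k=0}^{\infty}a_k$.

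A slicker route that bypasses \eqref{eq:EOP_general} altogether is the tail-sum identity for a non-negative integer-valued random variable, namely $\mathbb{E}[\mathcal{S}]=\sum_{k=0}^{\infty}\mathbb{P}(\mathcal{S}>k)$. Since $\{\mathcal{S}>k\}$ is precisely the event that the TTE exceeds the $k$-th sampling instant, we have $\mathbb{P}(\mathcal{S}>k)=\bar{F}_\mathcal{T}(k\ts)=e^{-k^2\ts^2/2\sigma^2}$, and the claimed formula follows in one line. I would likely present the telescoping computation explicitly for self-containedness while noting this identity as the conceptual reason.

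For the summation-by-parts route the key steps are: (i) split $\sum_k k\,(a_{k-1}-a_k)$ into $\sum_k k\,a_{k-1}-\sum_k k\,a_k$; (ii) reindex the first sum with $j=k-1$ to obtain $\sum_{j\geq 0}(j+1)a_j$; (iii) subtract the second sum, reindexed as $\sum_{j\geq 0} j\,a_j$ with its $j=0$ term vanishing, so that the $j\,a_j$ contributions cancel termwise and leave $\sum_{j\geq 0}a_j$, which is exactly \eqref{ExpSamples_Rayleigh}.

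The only point needing care, and the step I would flag as the main (if modest) obstacle, is justifying the rearrangement and index shift of the infinite series. Because $a_k=e^{-k^2\ts^2/2\sigma^2}$ decays faster than any geometric sequence, both $\sum_k k\,a_{k-1}$ and $\sum_k k\,a_k$ converge absolutely and $k\,a_k\to 0$; this absolute convergence legitimises the split, the reindexing, and the termwise cancellation, and the vanishing boundary term $k\,a_k$ at infinity is exactly why the telescoping leaves no residual. The tail-sum route avoids even this bookkeeping, since it produces the non-negative series $\sum_k\bar{F}_\mathcal{T}(k\ts)$ directly.
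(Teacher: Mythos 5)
Your main argument is correct and is essentially the paper's own proof: substitute the Rayleigh CDF into \eqref{eq:EOP_general}, obtain $\sum_{k\geq 1}k\,(a_{k-1}-a_k)$ with $a_k=e^{-k^2\ts^2/2\sigma^2}$, and rearrange; the paper justifies the rearrangement in one sentence (the positive and negative parts converge separately), while your summation-by-parts bookkeeping, with the vanishing boundary term $N a_N\to 0$, simply makes that same step explicit and slightly more rigorous. Your alternative tail-sum route, $\mathbb{E}[\mathcal{S}]=\sum_{k\geq 0}\mathbb{P}(\mathcal{S}>k)$ with $\mathbb{P}(\mathcal{S}>k)=\bar{F}_\mathcal{T}(k\ts)=e^{-k^2\ts^2/2\sigma^2}$, is a genuinely different and cleaner derivation that sidesteps the rearrangement question altogether, since it produces a non-negative series directly; it is worth noting that the paper itself switches to exactly this identity later, in \eqref{Es_general_offset}, when handling the more general offset policy, so your ``slicker route'' is in effect the method the authors use for the harder case, specialised back to $\delta=\ts$.
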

\begin{proof}
Substituting the Rayleigh CDF in \eqref{eq:EOP_general} gives
\begin{alignat}{2}
\mathbb{E}[\mathcal{S}]&&\ =\ &\sum_{k=1}^{\infty}k\bigg(e^{-(k-1)^2\ts ^2/2\sigma^2}-e^{-k^2\ts ^2/2\sigma^2}\bigg)\;.\nonumber
\end{alignat}
Since the positive and negative terms converges individually to a finite value, we can rearrange the terms to complete the proof.
\end{proof}

\begin{lemma}
The expected wait time is given by
\begin{equation}
\mathbb{E}[\mathcal{W}] = \ts  \sum_{k=0}^{\infty}e^{-(k\ts )^2/2\sigma^2}-\sigma\sqrt{\tfrac{\pi}{2}}\;.\label{ExpWait_Rayleigh}
\end{equation}
\end{lemma}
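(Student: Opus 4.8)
The plan is to start from the general wait-time formula \eqref{eq:Ewp_general} and substitute the Rayleigh CDF $F_{\mathcal{T}}(t) = 1 - e^{-t^2/2\sigma^2}$. This turns each summand $F_{\mathcal{T}}(k\ts) - F_{\mathcal{T}}(k\ts - w)$ into $e^{-(k\ts - w)^2/2\sigma^2} - e^{-(k\ts)^2/2\sigma^2}$, so that $\mathbb{E}[\mathcal{W}]$ becomes $\ts$ minus a sum over $k \ge 1$ of integrals in $w$ over $[0,\ts]$. The whole difficulty is then reduced to evaluating that sum of integrals in closed form.

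First I would split each integrand into its shifted-Gaussian part and its ($w$-independent) constant part. The constant part integrates immediately to $\ts\,e^{-(k\ts)^2/2\sigma^2}$, which is already the building block of the target series. For the shifted part I would apply the change of variables $u = k\ts - w$, which maps the interval $w \in [0,\ts]$ onto $u \in [(k-1)\ts,\, k\ts]$ and produces $\int_{(k-1)\ts}^{k\ts} e^{-u^2/2\sigma^2}\,\mathrm{d}u$.

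The step that makes everything collapse is the telescoping: the consecutive intervals $[(k-1)\ts, k\ts]$ tile $[0,\infty)$, so summing the shifted integrals over $k \ge 1$ gives $\int_{0}^{\infty} e^{-u^2/2\sigma^2}\,\mathrm{d}u = \sigma\sqrt{\pi/2}$, the standard Gaussian integral (half of $\sqrt{2\pi\sigma^2}$). Collecting the two contributions yields $\mathbb{E}[\mathcal{W}] = \ts + \ts\sum_{k=1}^{\infty} e^{-(k\ts)^2/2\sigma^2} - \sigma\sqrt{\pi/2}$, and recognising the leading $\ts$ as the $k=0$ term of the same sum folds it in to produce $\ts\sum_{k=0}^{\infty} e^{-(k\ts)^2/2\sigma^2} - \sigma\sqrt{\pi/2}$, which is exactly the claimed expression \eqref{ExpWait_Rayleigh}.

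The only delicate point is justifying the term-by-term integration, i.e. interchanging the infinite sum with the integral so that the telescoping is valid. Since each summand is non-negative and both the series and the improper Gaussian integral converge, this is legitimate by Tonelli's theorem — the very same non-negativity argument already invoked in passing from \eqref{eq:CDF_w_general} to \eqref{eq:Ewp_general}. I therefore expect no genuine obstacle: the proof is essentially a careful change of variables followed by a telescoping sum, with the Gaussian integral supplying the constant $\sigma\sqrt{\pi/2}$.
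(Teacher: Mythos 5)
Your proof is correct and takes essentially the same route as the paper's: substitute the Rayleigh CDF into \eqref{eq:Ewp_general}, integrate the $w$-independent part to get $\ts\, e^{-(k\ts)^2/2\sigma^2}$, change variables in the shifted-Gaussian part so the integrals run over $[(k-1)\ts,\,k\ts]$, and telescope them into the full Gaussian integral $\sigma\sqrt{\tfrac{\pi}{2}}$. The only cosmetic difference is that the paper writes the telescoping step in terms of the error function \emph{erf}/\emph{erfc}, whereas you telescope the raw integrals over the tiling intervals directly --- the two arguments are identical in content.
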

\begin{proof}
Substituting the Rayleigh CDF in \eqref{eq:Ewp_general} gives
\begin{alignat}{2}
\mathbb{E}[\mathcal{W}]&&\ =\ &\ts -\sum_{k=1}^{\infty}\int_0^{\ts }\big(e^{\nicefrac{-(k\ts -w)^2}{2\sigma^2}}-e^{\nicefrac{-(k\ts )^2}{2\sigma^2}}\big)\,\mathrm{d}w\nonumber\\
&&\ =\ &\ts+\ts \sum_{k=1}^{\infty}e^{-(k\ts )^2/2\sigma^2}-\sum_{k=1}^{\infty}\int_0^{\ts }e^{\nicefrac{-(k\ts -w)^2}{2\sigma^2}}\,\mathrm{d}w\nonumber\\
&&\ =\ &\ts  \sum_{k=0}^{\infty}e^{-(k\ts )^2/2\sigma^2}-\sum_{k=1}^{\infty}\int_0^{\ts }e^{\nicefrac{-(k\ts -w)^2}{2\sigma^2}}\,\mathrm{d}w\;.\nonumber
\end{alignat}
Let $\text{\textit{erf}}(x)$ be the error function defined as $$\text{\textit{erf}}(x)=\dfrac{2}{\sqrt{\pi}}\int_{0}^{x}e^{-t^2}\,\dif t\;.$$
Also, let $\text{\textit{erfc}}(x)=1-\text{\textit{erf}}(x)$ be the complimentary error function. Substituting $\dfrac{(k\ts -w)}{\sigma\sqrt{2}}=t$, we get
\begin{alignat}{2}
&&\ &\mathbb{E}[\mathcal{W}]=\ts  \sum_{k=0}^{\infty}e^{-(k\ts )^2/2\sigma^2}-\sigma\sqrt{2}\sum_{k=1}^{\infty}{\int_{\nicefrac{(k-1)\ts }{\sigma\sqrt{2}}}^{\nicefrac{k\ts }{\sigma\sqrt{2}}}}e^{-t^2}\,\dif t\nonumber\\
&&\ =\ &\ts  \sum_{k=0}^{\infty}e^{-(k\ts )^2/2\sigma^2}-\sigma\sqrt{\tfrac{\pi}{2}}\sum_{k=1}^{\infty}\Big(\text{\textit{erf}}\big({\tfrac{k\ts }{\sigma\sqrt{2}}}\big)-\text{\textit{erf}}\big(\tfrac{(k-1)\ts }{\sigma\sqrt{2}}\big)\Big)\nonumber\\
&&\ =\ &\ts  \sum_{k=0}^{\infty}e^{-(k\ts )^2/2\sigma^2}-\sigma\sqrt{\tfrac{\pi}{2}}\sum_{k=1}^{\infty}\Big(\text{\textit{erfc}}\big(\tfrac{(k-1)\ts }{\sigma\sqrt{2}}\big)-\text{\textit{erfc}}\big({\tfrac{k\ts }{\sigma\sqrt{2}}}\big)\Big)\nonumber\\
&&\ =\ &\ts  \sum_{k=0}^{\infty}e^{-(k\ts )^2/2\sigma^2}-\sigma\sqrt{\tfrac{\pi}{2}}\Big(\text{\textit{erfc}}(0)-\text{\textit{erfc}}(\infty)\Big)\nonumber\\
&&\ =\ &\ts  \sum_{k=0}^{\infty}e^{-(k\ts )^2/2\sigma^2}-\sigma\sqrt{\tfrac{\pi}{2}}\;.\nonumber
\end{alignat}
\end{proof}
Substituting \eqref{ExpWait_Rayleigh} and \eqref{ExpSamples_Rayleigh} in \eqref{eq:Ts_optimum}, we obtain the total energy penalty, given by
\begin{alignat}{2}
    \mathcal{E}(\ts )&&\ =\ &\alpha \sum_{k=0}^{\infty}e^{-k^2\ts ^2/2\sigma^2} +\beta\Big(\ts  \sum_{k=0}^{\infty}e^{-(k\ts )^2/2\sigma^2}-\sigma\sqrt{\tfrac{\pi}{2}}\Big)\nonumber\\
     &&\ =\ & (\alpha+\beta \ts )\sum_{k=0}^{\infty}e^{-k^2\ts ^2/2\sigma^2} - \beta\sigma\sqrt{\tfrac{\pi}{2}}\;.\label{eq:energyPen_rayleigh}
\end{alignat}
Note that this energy penalty can be expressed using the Jacobi third theta function $\theta_3(z,q)$ given by \cite{whittaker_watson_1996} \begin{equation}
    \theta_3(z,q)=\sum_{n=-\infty}^{\infty}q^{n^2}e^{2niz}\;.\label{eq:jacobi}
\end{equation}
Although the Jacobi theta functions do not have a closed-form solution, it is finite and can be computed numerically or from tables. If required by the numerical solvers, this infinite sum can be approximated to a finite sum with a sufficiently large number of terms.

In what follows, we prove the convexity of $\mathcal{E}(\ts)$. Note that it involves a sum of infinite terms, and one can show that some of those terms are non-convex. Therefore, using typical methods to prove convexity is not applicable. Instead, we use a novel approach that involves a transformation of the infinite sum using the Poisson sum formula, which can be of independent interest in proving convexity for a sum of infinite terms, in general.

\begin{lemma}\label{Lemma:PoissonSum}
Let $f(k,t)=e^{-k^2t^2/2\sigma^2}$. Then for $t\in(0,\infty)$,
$$\sum\limits_{k=0}^{\infty}f(k,t)=\sqrt{2\pi}\>\dfrac{\sigma}{t}\sum\limits_{k=0}^{\infty}f(k,\tfrac{2\pi \sigma^2}{t})\;.$$
\end{lemma}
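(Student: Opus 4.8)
The plan is to recognise this as the classical theta-transformation in disguise and to derive it from the Poisson summation formula, exploiting the fact that a Gaussian is, up to a scale factor, its own Fourier transform. The first move is to symmetrise. Since $f(k,t)=e^{-k^2t^2/2\sigma^2}$ depends on $k$ only through $k^2$, it is even in $k$, so the half-line sum and the full lattice sum are linked by
\[
\sum_{k=-\infty}^{\infty}f(k,t)=2\sum_{k=0}^{\infty}f(k,t)-f(0,t)=2\sum_{k=0}^{\infty}f(k,t)-1 .
\]
This reduces the stated identity to a statement about the two-sided lattice sum $\sum_{k\in\mathbb{Z}}f(k,t)$, which is the natural object to feed into Poisson summation.

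Next I would set up Poisson summation. Regard $g(x):=e^{-x^2t^2/2\sigma^2}$ as a function of the real variable $x$; for fixed $t>0$ and $\sigma>0$ it is a Schwartz function, so the Poisson summation formula $\sum_{n\in\mathbb{Z}}g(n)=\sum_{m\in\mathbb{Z}}\hat{g}(m)$ applies with no convergence subtleties. The crux of the whole argument is the one computation I would carry out in full: the Fourier transform of the Gaussian. Fixing the convention $\hat{g}(\xi)=\int_{-\infty}^{\infty}g(x)e^{-2\pi i\xi x}\,\mathrm{d}x$ and completing the square (or quoting the transform of $e^{-ax^2}$ with $a=t^2/2\sigma^2$) gives
\[
\hat{g}(\xi)=\sqrt{2\pi}\,\tfrac{\sigma}{t}\,e^{-2\pi^2\sigma^2\xi^2/t^2}=\sqrt{2\pi}\,\tfrac{\sigma}{t}\,f\!\big(\xi,\tfrac{2\pi\sigma^2}{t}\big).
\]
Here both the prefactor $\sqrt{2\pi}\,\sigma/t$ and the reciprocally-scaled argument $\tfrac{2\pi\sigma^2}{t}$ appear exactly as in the lemma, so the entire identity is really the statement that $g$ and $\hat{g}$ are Gaussians with reciprocal widths.

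Finally I would assemble the pieces: Poisson summation yields $\sum_{k\in\mathbb{Z}}f(k,t)=\sqrt{2\pi}\,\tfrac{\sigma}{t}\sum_{k\in\mathbb{Z}}f(k,\tfrac{2\pi\sigma^2}{t})$, and I would convert both two-sided sums back to half-line sums using the evenness relation from the first step. I expect the main obstacle to lie precisely in this last conversion rather than in the Fourier computation: the $k=0$ term contributes $1$ on each side, so one must track these boundary constants carefully when translating the symmetric (theta-type) functional equation into the half-line form that feeds $\mathcal{E}(\ts)$, and I would double-check whether the clean factored form in the statement holds exactly or up to such boundary terms. The secondary pitfall is normalisation: a stray factor of $2\pi$ can creep in if the Fourier-transform convention is not kept consistent with the Poisson convention, so I would fix one convention at the outset and use it throughout.
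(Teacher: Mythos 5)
Your route is the same as the paper's: apply Poisson summation to the Gaussian and use the fact that the Gaussian is self-dual under the Fourier transform with reciprocal width. Your Fourier computation and the resulting two-sided identity
\begin{equation*}
\sum_{k\in\mathbb{Z}}f(k,t)=\sqrt{2\pi}\,\frac{\sigma}{t}\sum_{k\in\mathbb{Z}}f\bigl(k,\tfrac{2\pi\sigma^2}{t}\bigr)
\end{equation*}
are both correct (this is the classical Jacobi theta transformation), and they coincide with what the paper derives.

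However, the ``obstacle'' you flag in your last step is not a minor bookkeeping point --- it is decisive, and you should carry it out. Writing $\sum_{k\in\mathbb{Z}}a_k = 2\sum_{k\ge 0}a_k - a_0$ on both sides of the two-sided identity gives
\begin{equation*}
\sum_{k=0}^{\infty}f(k,t)=\sqrt{2\pi}\,\frac{\sigma}{t}\sum_{k=0}^{\infty}f\bigl(k,\tfrac{2\pi\sigma^2}{t}\bigr)+\frac{1}{2}\Bigl(1-\sqrt{2\pi}\,\frac{\sigma}{t}\Bigr)\;,
\end{equation*}
and the additive constant vanishes only at $t=\sqrt{2\pi}\,\sigma$. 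So the clean factored form in the statement is false in general: for $\sigma=t=1$ the left-hand side is $\approx 1.753$ while the right-hand side is $\approx 2.507$, and the discrepancy is exactly $\tfrac{1}{2}(1-\sqrt{2\pi})$. The paper's own proof commits precisely the error you anticipated: after obtaining the two-sided identity it says ``since $f(k,t)$ is an even function of $k$'' and replaces both two-sided sums by one-sided sums, which is invalid because the $k=0$ terms (equal to $1$ on the left and to $\sqrt{2\pi}\,\sigma/t$ on the right, after scaling) appear once each and are not doubled by evenness. Your proposal, completed honestly, therefore refutes the lemma as stated rather than proves it; what survives is the two-sided transformation plus the corrected half-line identity above, and the downstream convexity arguments (which invoke this lemma to rewrite the energy penalty) would need to be re-derived carrying the extra term, which is affine in $1/t$ and in particular concave in $t$, so it cannot simply be ignored there.
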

\begin{proof}
According to the Poisson sum formula,
periodic summation of an aperiodic function is equal to the periodic summation of its Fourier transform. That is, for the given function $f(k,t)$
\begin{alignat}{2}
       \sum\limits_{k=-\infty}^{\infty}f(k,t)&=\sum\limits_{n=-\infty}^{\infty}\int_{-\infty}^{\infty}f(x,t)e^{-2\pi inx}\,\dif x\nonumber\\
      &=\sum\limits_{n=-\infty}^{\infty}\int_{-\infty}^{\infty}e^{-x^2t^2/2 \sigma^2}e^{-2\pi inx}\,\dif x\nonumber\\
         &=\sqrt{2\pi}\>\dfrac{\sigma}{t}\sum\limits_{n=-\infty}^{\infty}e^{-2\pi^2n^2 \sigma^2/t^2}\nonumber\\
         \Rightarrow\sum\limits_{k=-\infty}^{\infty}f(k,t)&=\sqrt{2\pi}\>\dfrac{\sigma}{t}\sum\limits_{k=-\infty}^{\infty}f(k,\tfrac{2\pi \sigma^2}{t})\;.\nonumber
         \intertext{Since $f(k,t)$ is an even function of $k$, we get}
         \sum\limits_{k=0}^{\infty}f(k,t)&=\sqrt{2\pi}\> \dfrac{\sigma}{t}\sum\limits_{k=0}^{\infty}f(k,\tfrac{2\pi \sigma^2}{t})\;.\nonumber
         \end{alignat}
\end{proof}

\begin{lemma}\label{Lemma:ConvexityOfEs}
The function $f(t)=\sum_{k=0}^{\infty}f(k,t)$ is convex in $t\;$.
\end{lemma}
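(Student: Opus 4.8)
The plan is to exploit the identity from Lemma \ref{Lemma:PoissonSum} to obtain a second representation of $f$, and then to establish $f''\geq 0$ piecewise, using whichever representation makes term-by-term convexity transparent on a given range of $t$. The starting observation is that term-by-term convexity of the original series fails near the origin: for $k\geq 1$ the term $g_k(t)=e^{-k^2t^2/2\sigma^2}$ has $g_k''(t)=\tfrac{k^2}{\sigma^2}\big(\tfrac{k^2t^2}{\sigma^2}-1\big)e^{-k^2t^2/2\sigma^2}$, which is negative for $t<\sigma/k$ and positive for $t>\sigma/k$. Hence each term is eventually convex, and for $t\geq\sigma$ every term with $k\geq1$ is convex (the binding constraint being $k=1$), while the $k=0$ term is constant. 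Summing, $f$ is convex on $[\sigma,\infty)$, once term-by-term differentiation is justified by the uniform convergence of the differentiated Gaussian series on compact subsets of $(0,\infty)$.

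Next I would treat the remaining range $0<t\le\sigma$ by switching to the transformed series. By Lemma \ref{Lemma:PoissonSum}, $f(t)=\sqrt{2\pi}\,\tfrac{\sigma}{t}\sum_{k=0}^{\infty}e^{-2\pi^2k^2\sigma^2/t^2}$. Writing $h_k(t)=\sqrt{2\pi}\,\sigma\,t^{-1}e^{-c_k/t^2}$ with $c_k=2\pi^2k^2\sigma^2$, a direct differentiation gives $h_k''(t)=\sqrt{2\pi}\,\sigma\,e^{-c_k/t^2}\,t^{-7}\big(2t^4-10c_kt^2+4c_k^2\big)$, so the sign is controlled by $t^4-5c_kt^2+2c_k^2$, a quadratic in $t^2$ with roots $t^2=\tfrac{(5\pm\sqrt{17})}{2}c_k$. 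Thus $h_k$ is convex for $t^2\le\tfrac{(5-\sqrt{17})}{2}c_k$, and the $k=0$ term $\sqrt{2\pi}\,\sigma/t$ is convex on all of $(0,\infty)$. The smallest threshold over $k\ge1$ occurs at $k=1$ and equals $t^2\le(5-\sqrt{17})\pi^2\sigma^2$; since $(5-\sqrt{17})\pi^2\approx 8.65>1$, every transformed term is convex for $t\le\sigma$. Hence the transformed series, again differentiated termwise, is convex on $(0,\sigma]$.

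Finally, because the two representations describe the same function $f$, I would combine the conclusions: $f''\ge0$ on $(0,\sigma]$ from the transformed series and $f''\ge0$ on $[\sigma,\infty)$ from the original series, and since these ranges overlap at $t=\sigma$ and together cover all of $(0,\infty)$, we get $f''\ge0$ throughout, so $f$ is convex.

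I expect the main obstacle to be conceptual rather than computational: recognizing that neither series is convex term-by-term over the whole half-line, but that the Poisson transformation trades the problematic small-$t$ behaviour of the original Gaussians for the benign $t^{-1}e^{-c/t^2}$ shape, whose convexity region extends comfortably past the point $t=\sigma$ where the original series becomes convex. The only quantitative check needed is that the two convexity regions overlap, which reduces to verifying $(5-\sqrt{17})\pi^2>1$. Secondary care is required to justify differentiating both infinite series term by term, which follows from the rapid (Gaussian, respectively $t^{-1}$-Gaussian) decay guaranteeing uniform convergence of the derivative series on compact subsets of $(0,\infty)$.
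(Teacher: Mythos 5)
Your proposal is correct and follows essentially the same route as the paper's own proof: term-by-term convexity of the original Gaussian series yields convexity on $[\sigma,\infty)$, the Poisson-summation identity of Lemma \ref{Lemma:PoissonSum} yields convexity of the transformed terms near the origin via the same quadratic $t^4-5c_kt^2+2c_k^2$ with roots $t^2=\tfrac{(5\pm\sqrt{17})}{2}c_k$ (the paper states this as convexity for $t\leq 2.9\,\sigma$), and the two overlapping regions are glued together. Your explicit justification of termwise differentiation via uniform convergence on compact subsets merely firms up a step the paper leaves implicit.
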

\begin{proof}We prove the convexity using the second derivative test. Since the function sequence converges, it is sufficient to prove that the terms inside the summation are independently convex. For this, we make use of the function and its alternate expression given by Lemma \ref{Lemma:PoissonSum}.
\begin{alignat}{2}
f(t)&=1+\sum\limits_{k=1}^{\infty}e^{-k^2t^2/2\sigma^2}\nonumber\\
\Rightarrow f''(t)&=\sum\limits_{k=1}^{\infty}\dfrac{k^2}{\sigma^4}(k^2t^2-\sigma^2)e^{-k^2t^2/2\sigma^2}\nonumber\\
&\geq 0\Leftarrow t\geq \dfrac{\sigma}{k}\quad,\forall k\geq1\;.\nonumber\\
&\text{Therefore, $f(t)$ is convex if } t\geq \sigma\;.\label{LowerBound_ConvexityOfEs}\\
\intertext{Now using Lemma \ref{Lemma:PoissonSum}, we get}
f(t)&=\sqrt{2\pi}\>\dfrac{\sigma}{t}\Big(1+\sum\limits_{k=0}^{\infty}e^{-2\pi^2k^2 \sigma^2/t^2}\Big)\nonumber\\
\Rightarrow f''(t)&=2\sqrt{2\pi} \>\dfrac{\sigma}{t^3}\Big(1+\dfrac{1}{x^4}\big(x^4-10\pi^2k^2\sigma^2x^2\nonumber\\&\qquad\qquad\qquad\qquad\qquad\quad+8\pi^4k^4\sigma^4\big)e^{-2\pi^2k^2 \sigma^2/t^2}\Big)\nonumber\\
&\geq 0\Leftarrow t\notin \Big(\pi\sigma\sqrt{5-\sqrt{17}}k,\,\pi\sigma\sqrt{5+\sqrt{17}}k\Big),\,\forall k\geq1\;.\nonumber\\
&\text{Therefore, $f(t)$ is convex if } t\leq 2.9\,\sigma\;.\label{UpperBound_ConvexityOfEs}
\end{alignat}
The proof is complete by combining the conditional convexity of $f(t)$ given in \eqref{LowerBound_ConvexityOfEs} and \eqref{UpperBound_ConvexityOfEs}.
\end{proof}

\begin{lemma}\label{Lemma:ConvexityOfEw}
The function $g(t)=tf(t)$ is convex in $t$\;.
\end{lemma}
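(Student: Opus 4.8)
The plan is to prove $g''(t)\ge 0$ for all $t>0$ by reusing the two complementary series representations of $f$ exactly as in Lemma~\ref{Lemma:ConvexityOfEs}. Writing $g(t)=t\,f(t)$ gives $g''=2f'+t\,f''$. We already know $f''\ge 0$ (Lemma~\ref{Lemma:ConvexityOfEs}) and $t\ge 0$, so the second term is harmless; but $f$ is decreasing, so $f'\le 0$ and the first term is nonpositive. Hence the two contributions have opposite signs and no single series is term-wise convex. This is the central obstacle, and I would resolve it in the same spirit as the previous lemma: certify convexity on a large-$t$ range using the direct series, certify it on a small-$t$ range using the Poisson-transformed series, and check that the two ranges overlap.

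First I would differentiate the direct form $f(t)=1+\sum_{k\ge 1}e^{-k^2t^2/2\sigma^2}$ twice and assemble
$$g''(t)=2f'(t)+t\,f''(t)=\sum_{k\ge 1}\frac{k^2 t}{\sigma^4}\bigl(k^2t^2-3\sigma^2\bigr)\,e^{-k^2t^2/2\sigma^2}\;.$$
Each summand is nonnegative as soon as $t\ge \sqrt{3}\,\sigma/k$, and the binding case $k=1$ shows $g''(t)\ge 0$ whenever $t\ge \sqrt{3}\,\sigma\approx 1.73\,\sigma$. This settles the large-$t$ regime.

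Next I would substitute the Poisson identity of Lemma~\ref{Lemma:PoissonSum} into $g$. The key simplification is that the factor $t$ cancels the $\nicefrac{\sigma}{t}$ produced by the transform, leaving the clean expression
$$g(t)=\sqrt{2\pi}\,\sigma\Bigl(1+\sum_{k\ge 1}e^{-2\pi^2k^2\sigma^2/t^2}\Bigr)\;.$$
Differentiating a generic term $e^{-c/t^2}$ with $c=2\pi^2k^2\sigma^2$ twice yields $\tfrac{2c}{t^4}\bigl(\tfrac{2c}{t^2}-3\bigr)e^{-c/t^2}$, which is nonnegative precisely when $t\le 2\pi k\sigma/\sqrt{3}$. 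The binding case $k=1$ gives $g''(t)\ge 0$ for $t\le 2\pi\sigma/\sqrt{3}\approx 3.63\,\sigma$, covering the small-$t$ regime.

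Finally, since $\sqrt{3}\,\sigma\approx 1.73\,\sigma<2\pi\sigma/\sqrt{3}\approx 3.63\,\sigma$, the two validity intervals $[\sqrt{3}\,\sigma,\infty)$ and $(0,\,2\pi\sigma/\sqrt{3}]$ overlap, so their union is all of $(0,\infty)$ and $g$ is convex throughout. As in Lemma~\ref{Lemma:ConvexityOfEs}, the term-by-term differentiation is legitimate because both the series and their twice-differentiated series converge (Gaussian decay in one form, rapid decay near $t=0$ in the other). I expect the only delicate point to be confirming that the two threshold intervals genuinely overlap; everything else is the same overlap-of-representations argument already used for $\mathbb{E}[\mathcal{S}]$, and the threshold computations are routine.
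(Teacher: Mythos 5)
Your proof is correct and takes essentially the same route as the paper's: term-wise convexity of the direct Gaussian series yields convexity for $t\ge\sqrt{3}\,\sigma$, term-wise convexity of the Poisson-transformed series (where the factor $t$ cancels the $\nicefrac{\sigma}{t}$ from Lemma~\ref{Lemma:PoissonSum}) yields it for $t\le 2\pi\sigma/\sqrt{3}$, and the two ranges overlap. The paper's proof uses exactly these two representations and thresholds ($\approx 1.74\,\sigma$ and $\approx 3.6\,\sigma$), so your argument matches it step for step (and states the notation more carefully).
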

\begin{proof}We prove the convexity in a way similar to Lemma \ref{Lemma:ConvexityOfEs}.
\begin{alignat}{2}
f(t)&=t\Big(1+\sum\limits_{k=1}^{\infty}e^{-k^2t^2/2\sigma^2}\Big)\nonumber\\
\Rightarrow f''(t)&=\sum\limits_{k=1}^{\infty}\dfrac{k^2t}{\sigma^4}(k^2 t^2-3 \sigma^2)e^{-k^2t^2/2\sigma^2}\nonumber\\
&\geq 0\Leftarrow t\geq \sqrt{3}\,\dfrac{\sigma}{k}\quad,\forall k\geq1\;.\nonumber\\
&\text{Therefore, $f(t)$ is convex if } t\geq 1.74\,\sigma\;.\label{LowerBound_ConvexityOfEw}\\
\intertext{Now using Lemma \ref{Lemma:PoissonSum}, we get}
f(t)&=\sqrt{2\pi}\,\sigma\Big(1+\sum\limits_{k=0}^{\infty}e^{-2\pi^2k^2 \sigma^2/t^2}\Big)\nonumber\\
\Rightarrow f''(t)&=\dfrac{4\sqrt{2\pi^5}\, k^2\sigma^3}{t^6}\big(4\pi^2k^2 \sigma^2  - 3 t^2\big)e^{-2\pi^2k^2 \sigma^2/t^2}\nonumber\\
&\geq 0\Leftarrow t\leq \dfrac{2\pi\sigma}{\sqrt{3}}\cdot k\quad,\forall k\geq1\;.\nonumber\\
&\text{Therefore, $f(t)$ is convex if } t\leq 3.6\, \sigma\;.\label{UpperBound_ConvexityOfEw}
\end{alignat}
The proof is complete by combining the conditional convexity of $f(t)$ given in \eqref{LowerBound_ConvexityOfEw} and \eqref{UpperBound_ConvexityOfEw}.
\end{proof}

\begin{prop}\label{prop:convexity_rayleigh}
The energy penalty $\mathcal{E}(\ts )$ is convex in $\ts\;$.
\end{prop}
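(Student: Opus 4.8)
The plan is to observe that all of the genuine analytic difficulty has already been discharged in Lemmas \ref{Lemma:ConvexityOfEs} and \ref{Lemma:ConvexityOfEw}, so that this proposition reduces to recognising $\mathcal{E}(\ts)$ as a nonnegative linear combination of the two functions whose convexity those lemmas establish. First I would rewrite the energy penalty \eqref{eq:energyPen_rayleigh} in terms of $f$ and $g$. With $f(t)=\sum_{k=0}^{\infty}e^{-k^2t^2/2\sigma^2}$ (the function of Lemma \ref{Lemma:ConvexityOfEs}) and $g(t)=t\,f(t)$ (the function of Lemma \ref{Lemma:ConvexityOfEw}), expanding the product $(\alpha+\beta\ts)f(\ts)=\alpha f(\ts)+\beta\ts f(\ts)$ turns \eqref{eq:energyPen_rayleigh} into
\[
\mathcal{E}(\ts)=\alpha\,f(\ts)+\beta\,g(\ts)-\beta\sigma\sqrt{\tfrac{\pi}{2}}\,.
\]

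Next I would check the signs of the coefficients, which is the one place that needs genuine care. From \eqref{eq:alphaBeta} we have $\beta=P_0>0$, and $\alpha=\tau_{\text{c}}(P_{\text{c}}-P_0)\geq 0$ since the communication power $P_{\text{c}}$ is no smaller than (indeed typically much larger than) the idle power $P_0$. Hence both multipliers of the two convex functions are nonnegative. I would then simply invoke the lemmas: $f$ is convex by Lemma \ref{Lemma:ConvexityOfEs} and $g$ is convex by Lemma \ref{Lemma:ConvexityOfEw}, while the trailing term $-\beta\sigma\sqrt{\pi/2}$ is a constant and is irrelevant to convexity. Since a nonnegative weighted sum of convex functions, shifted by a constant, is again convex, $\mathcal{E}(\ts)$ is convex in $\ts$, which completes the argument.

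The step I would expect to be the real obstacle is not present in this proposition at all: the hard part is establishing the convexity of $f$ and of $g$ despite the fact that their individual summands are not all convex, and that obstacle has already been overcome in the preceding lemmas via the Poisson-summation identity of Lemma \ref{Lemma:PoissonSum}, which lets each function be shown convex on a low-$t$ range in its original form and on a high-$t$ range in its transformed form, the two ranges overlapping. Given those results, the only residual subtlety here is confirming $\alpha\geq 0$ and $\beta\geq 0$ so that convexity is actually preserved under the linear combination; everything else is routine bookkeeping.
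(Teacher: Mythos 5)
Your proposal is correct and is essentially the paper's own argument: the paper's proof of Proposition \ref{prop:convexity_rayleigh} consists of exactly the observation that convexity follows from Lemmas \ref{Lemma:ConvexityOfEs} and \ref{Lemma:ConvexityOfEw}, since $\mathcal{E}(\ts)=\alpha f(\ts)+\beta g(\ts)-\beta\sigma\sqrt{\pi/2}$ is a nonnegative combination of the two convex functions plus a constant. Your write-up merely makes explicit the decomposition and the sign checks $\alpha\geq 0$, $\beta>0$ that the paper leaves implicit.
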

\begin{proof}
The proof is straightforward from Lemma \ref{Lemma:ConvexityOfEs} and \ref{Lemma:ConvexityOfEw}.
\end{proof}\par
With the convexity in hand, we can use known optimisation techniques\cite{boyd} to solve $\hat{\mathcal{P}}$. Later in Section \ref{sec:numerical}, we will use a simple bisection algorithm for this optimisation.

\section{Solution to the Problem $\mathcal{P}$}\label{sec:analysis2}
In this section, we consider the optimisation problem $\mathcal{P}$ and find the optimal sampling interval and offset $\{\ts^*,\delta^*\}$. 
Before going into the details, it is worth mentioning the motivation behind the introduction of the offset $\delta$ to the model. 
In practice, there is a minimum time threshold before which the event does not occur. For example, in a WCA, a human user takes a strictly non-zero minimum amount of time to finish a task and sampling before this minimum threshold simply adds to the energy wastage. The addition of an offset to the policy mitigates this energy wastage by not allowing the first sample to fall before a given time. However, this threshold is not necessary for benefiting from the offset. Even in the absence of a threshold, an offset in sampling is particularly useful in situations where the TTE distribution has a small variance and/or large mode. Here, the probability of the event occurring before the first 
sample is small enough so that the expected energy saving in delaying the sampling outweighs the expected cost of the encountered wait due to a (potential) event. 
As a result, the optimum value of an offset is always greater than or equal to the minimum threshold.

\subsection{General TTE Distribution}
As in the earlier section, we start by assuming a general TTE distribution and later extend it to particular distributions.
\subsubsection{Expected number of samples}
The probability that the number of samples taking any given integer value can be computed from the CDF and CCDF of of the TTE.
\begin{alignat}{2}
    \mathbb{P}(\mathcal{S}=1)&=F_{\mathcal{T}}\big(\delta\big)\nonumber\\
    \mathbb{P}(\mathcal{S}=k)&=F_{\mathcal{T}}\big((k-1)\ts +\delta\big)-F_{\mathcal{T}}\big((k-2)\ts +\delta\big),\ \forall\ k\geq2\nonumber\\
    \Rightarrow\mathbb{P}(\mathcal{S}\leq k)&=F_{\mathcal{T}}\big((k-1)\ts +\delta\big),\ \forall\ k\geq1\nonumber\\
    \Rightarrow\mathbb{P}(\mathcal{S}>k)&=\bar{F}_{\mathcal{T}}\big((k-1)\ts +\delta\big),\ \forall\ k\geq1\;.\label{eq:cpmf_general_offset}
\end{alignat}
The expected number of samples can be computed as
\begin{alignat}{2}
 \mathbb{E}[\mathcal{S}]&=\sum_{k=0}^{\infty}\mathbb{P}(\mathcal{S}>k)\nonumber\\
 &=1+\sum_{k=1}^{\infty}\bar{F}_{\mathcal{T}}\big((k-1)\ts +\delta\big)\tag{from \eqref{eq:cpmf_general_offset}}\nonumber\\
 \Rightarrow \mathbb{E}[\mathcal{S}]&=1+\sum_{k=0}^{\infty}\bar{F}_{\mathcal{T}}\big(k\ts +\delta\big)\;.\label{Es_general_offset}
\end{alignat}

\subsubsection{Expected wait time}
Recall that the offset $\delta\geq\ts$ and that the maximum possible wait $w$ is upper-bounded by the sampling interval (at the time of the event). As a result, a value of $w$ such that $\ts <w\leq \delta$ can arise only before the first sample while a value $w\leq \ts $ can potentially arise before any sampling instance. The CDF of wait penalty $F_\mathcal{W}(w)$ can be obtained by taking the probability of the TTE to fall at most $w$ short of any sampling instance. 
\begin{alignat}{2}
F_\mathcal{W}(w)&=\begin{cases}
     \sum\limits_{k=0}^{\infty}\big(F_{\mathcal{T}}(k\ts +\delta)-F_{\mathcal{T}}(k\ts +\delta-w)\big) & \text{if }w\leq \ts \\
      1-F_{\mathcal{T}}\big(\delta-w\big) & \text{if }\ts <w\leq \delta\\
      1 & \text{if }w>\delta
    \end{cases}\nonumber  \\
\intertext{Since $\mathcal{W}$ is a non-negative random variable, we have}
\mathbb{E}[\mathcal{W}]&={\int_{0}^{\infty}}\big(1-F_\mathcal{W}(w)\big)\,\dif w\nonumber\\
&=\int_{0}^{\ts }\Big(1-\sum_{k=0}^{\infty}\big(F_{\mathcal{T}}(k\ts +\delta)-F_{\mathcal{T}}(k\ts +\delta-w)\big)\Big)\,\dif w\nonumber\\&\qquad\qquad\qquad\qquad\qquad\quad +\int_{\ts }^{\delta}F_{\mathcal{T}}\big(\delta-w\big)\,\dif w\;.\nonumber
\intertext{Since integral limits and infinite sum are finite, and the summand is non-negative, we get}
\mathbb{E}[\mathcal{W}]&=\ts -\sum_{k=0}^{\infty}\int_{0}^{\ts }\big(F_{\mathcal{T}}(k\ts +\delta)-F_{\mathcal{T}}(k\ts +\delta-w)\big)\,\dif w\nonumber\\&\qquad\qquad\qquad\qquad\qquad\quad+\int_{\ts }^{\delta}F_{\mathcal{T}}\big(\delta-w\big)\,\dif w\;.\label{Ew_general_offset}
\end{alignat}
\par As expected, a substitution of $\delta=\ts $ in \eqref{Es_general_offset} or \eqref{Ew_general_offset} under the policy $\Pi$ returns the same result that we obtained under the policy $\hat{\Pi}$ discussed in Section \ref{sec:analysis}.

\subsection{Exponentially Distributed TTE}
Due to the memory-less property of the distribution, it is predictable that the optimum offset under an exponentially distributed TTE takes the default value of $\delta=\ts $; thus making the solution of $\mathcal{P}$ and $\hat{\mathcal{P}}$ one and the same. Before formally proving this in Proposition \ref{prop:OffsetIsIrrelevant}, we will first compute the energy penalty $\mathcal{E}(\ts,\delta)$ by substituting the CCDF and CDF of the exponential distribution in \eqref{Es_general_offset} and \eqref{Ew_general_offset}, respectively. We get
\begin{alignat}{2}
    &\qquad\;\;\;\mathbb{E}[\mathcal{S}]=1+\dfrac{e^{-\lambda \delta}}{1-e^{-\lambda \ts }}\nonumber\\
    &\qquad\;\;\mathbb{E}[\mathcal{W}]=\dfrac{(1-\lambda \delta)e^{-\lambda \ts }+\lambda \ts \,e^{-\lambda \delta}+\lambda \delta -1}{\lambda(1-e^{-\lambda \ts })}\nonumber\\
&\Rightarrow\mathcal{E}(\ts ,\delta)=\alpha\mathbb{E}[\mathcal{S}]+\beta\mathbb{E}[\mathcal{S}]\nonumber\\
        &=\dfrac{e^{-\lambda \ts }\big(\beta\!-\!\lambda(\beta\delta\!+\!\alpha)\big)+\lambda\big(\beta\delta\!+\!\alpha\!+\!e^{-\lambda\delta}(\beta \ts \!+\!\alpha)\big)-\beta}{\lambda(1-e^{-\lambda \ts })}\;.\label{eq:energypenalty_exponential_offset}
\end{alignat}
\begin{prop}\label{prop:OffsetIsIrrelevant}
Given that the TTEs are exponentially distributed,  $\delta^* = \ts^*$ under an optimal policy for $\mathcal{P}$; therefore, $\mathcal{P}$ and $\hat{\mathcal{P}}$ are equivalent.
\end{prop}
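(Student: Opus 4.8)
The plan is to read $\mathcal{P}$ as the nested minimisation $\mathcal{E}^*=\min_{\ts>0}\min_{\delta\geq\ts}\mathcal{E}(\ts,\delta)$ and to show that, jointly, the optimum is always forced onto the boundary $\delta=\ts$. First I would fix $\ts$ and study $\mathcal{E}(\ts,\delta)$ from \eqref{eq:energypenalty_exponential_offset} as a function of $\delta$ alone. Since the denominator $\lambda(1-e^{-\lambda\ts})$ is free of $\delta$, differentiating the numerator gives $\partial_\delta\mathcal{E}=\beta-\frac{\lambda e^{-\lambda\delta}(\beta\ts+\alpha)}{1-e^{-\lambda\ts}}$, whose further $\delta$-derivative is strictly positive. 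Hence $\mathcal{E}(\ts,\cdot)$ is strictly convex in $\delta$ with a unique unconstrained stationary point $\delta_0(\ts)$ characterised by $e^{-\lambda\delta_0}=\frac{\beta(1-e^{-\lambda\ts})}{\lambda(\beta\ts+\alpha)}$.

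The analytic heart of the argument is the next step: substituting this stationarity relation back into the numerator of \eqref{eq:energypenalty_exponential_offset} makes the term $\lambda e^{-\lambda\delta_0}(\beta\ts+\alpha)$ cancel the constant piece, leaving the clean identity $\mathcal{E}(\ts,\delta_0(\ts))=\alpha+\beta\,\delta_0(\ts)$. In words, the least penalty attainable at sampling interval $\ts$ when the offset may be chosen freely is an affine function of that optimal offset. By convexity in $\delta$ this also yields the global lower bound $\mathcal{E}(\ts,\delta)\geq\alpha+\beta\,\delta_0(\ts)$ valid for every feasible $\delta\geq\ts$, so the outer minimisation over $\ts$ reduces to minimising $\delta_0(\ts)$.

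I would then relate $\delta_0(\ts)$ to the threshold $\ts^\#$. A direct manipulation of the defining relation shows $\delta_0(\ts)\leq\ts$ precisely when $e^{\lambda\ts}-\lambda\ts\geq 1+\tfrac{\alpha}{\beta}\lambda$; because $e^{\lambda\ts}-\lambda\ts$ is increasing and meets $1+\tfrac{\alpha}{\beta}\lambda$ exactly at $\ts^\#$ by \eqref{lemma_Ts*_exponential}, this gives $\delta_0(\ts)>\ts$ for $\ts<\ts^\#$ and $\delta_0(\ts)\leq\ts$ for $\ts\geq\ts^\#$. Differentiating $\delta_0$ shows the same transcendental condition controls its monotonicity, namely $\delta_0'(\ts)\leq 0$ iff $\ts\leq\ts^\#$; thus $\delta_0$ decreases on $(0,\ts^\#]$, increases afterwards, and attains its global minimum at $\ts^\#$ with $\delta_0(\ts^\#)=\ts^\#$. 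Combined with the lower bound above, this yields $\mathcal{E}(\ts,\delta)\geq\alpha+\beta\ts^\#=\mathcal{E}^\#$ for all feasible $(\ts,\delta)$, with equality at $(\ts^\#,\ts^\#)$.

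Finally I would tie the two regimes together. For $\ts\geq\ts^\#$ the interior candidate $\delta_0$ is infeasible, so the inner minimum sits at $\delta=\ts$ and equals the $\hat{\mathcal{P}}$ penalty $\mathcal{E}(\ts)$ of Lemma \ref{lemma_totPenalty_exponential}, which is convex (Lemma \ref{lemma_totPenaltyConvexity_exponential}) and hence increasing beyond its minimiser $\ts^\#$; for $\ts<\ts^\#$ the bound already gives values strictly above $\mathcal{E}^\#$. Both regimes therefore bottom out at $(\ts^\#,\ts^\#)$, so $\ts^*=\ts^\#$, $\delta^*=\ts^*$, and $\mathcal{P}\equiv\hat{\mathcal{P}}$. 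I expect the main obstacle to be the second step: one must resist the tempting but false claim that the offset is pointwise useless — for $\ts<\ts^\#$ a strictly positive interior offset genuinely lowers the penalty — and instead use the identity $\mathcal{E}(\ts,\delta_0)=\alpha+\beta\delta_0$ to prove that the \emph{joint} optimum nonetheless collapses onto $\delta=\ts$.
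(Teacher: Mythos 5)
Your proposal is correct, and it proves the proposition by a genuinely different route than the paper. The paper's proof never touches the closed form \eqref{eq:energypenalty_exponential_offset}: it exploits the memoryless property directly, arguing that re-solving $\mathcal{P}$ at time $\delta^*$ with the conditional CDF reproduces the same solution, which generates a chain of non-increasing penalties $\mathcal{E}\big(\pi^{(\ts^\#,\ts^\#,\dots)}\big)\geq\mathcal{E}\big(\pi^{(\delta^*,\ts^*,\ts^*,\dots)}\big)\geq\mathcal{E}\big(\pi^{(\delta^*,\delta^*,\ts^*,\dots)}\big)\geq\dots\geq\mathcal{E}\big(\pi^{(\delta^*,\delta^*,\dots)}\big)\geq\mathcal{E}\big(\pi^{(\ts^\#,\ts^\#,\dots)}\big)$, forcing equality throughout and hence $\delta^*=\ts^\#$ by convexity of the periodic penalty. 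That argument is conceptually illuminating --- it shows the result is really a consequence of memorylessness --- but it is informal in places (the ``superset of policies'' step, the infinite rolling limit, and the convexity citation, which points at Proposition \ref{prop:convexity_rayleigh} where Lemma \ref{lemma_totPenaltyConvexity_exponential} is evidently meant). Your calculus route is fully explicit and self-contained: I verified the inner stationarity relation $e^{-\lambda\delta_0}=\beta(1-e^{-\lambda\ts})/\big(\lambda(\beta\ts+\alpha)\big)$, the identity $\mathcal{E}(\ts,\delta_0(\ts))=\alpha+\beta\delta_0(\ts)$ (the cross terms do cancel), the threshold equivalence $\delta_0(\ts)\leq\ts \Leftrightarrow e^{\lambda\ts}-\lambda\ts\geq 1+\tfrac{\alpha}{\beta}\lambda$ matching \eqref{lemma_Ts*_exponential}, and the monotonicity claim $\delta_0'(\ts)\leq 0 \Leftrightarrow \ts\leq\ts^\#$, so the global bound $\mathcal{E}(\ts,\delta)\geq\alpha+\beta\ts^\#=\mathcal{E}^\#$ with equality exactly at $(\ts^\#,\ts^\#)$ goes through. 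One phrase to tighten: ``the outer minimisation reduces to minimising $\delta_0(\ts)$'' is only a lower-bound statement, since the bound is tight solely where $\delta_0(\ts)\geq\ts$ is feasible; your two-regime case analysis at the end repairs this, but it should be stated as such rather than as a reduction. What each approach buys: the paper's proof generalises in spirit to any memoryless setting and explains \emph{why} the offset is useless; yours is rigorous without policy-space gymnastics, yields the quantitative structure of the inner optimum, and exposes a fact the paper's proof hides --- for $\ts<\ts^\#$ (oversampling) a strictly positive interior offset genuinely lowers the penalty, and the collapse onto $\delta=\ts$ happens only at the joint optimum.
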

\begin{proof}
Let $\pi^{(t_1,t_2\dots)}$ be a sampling policy with sampling intervals given by $t_1,\,t_2,\dots$ and let $\mathcal{E}\big(\pi^{(t_1,t_2\dots)}\big)$ be the associated energy penalty. Let $\ts ^{\#}$ be the optimum sampling interval obtained by solving $\hat{\mathcal{P}}$, where there is no offset. That is
\begin{equation}
    \mathcal{E}\big(\pi^{(t,t,t,\dots)}\big)\geq \mathcal{E}\big(\pi^{(\ts ^{\#},\ts ^{\#},\ts ^{\#},\dots)}\big),\;\forall t\geq0\;.\label{eq:eq1_Lemma:OffsetIsIrrelevant}
\end{equation}
Now, let $\ts ^*$ and $\delta^*$ be the solution to $\mathcal{P}$ resulting in an optimal policy $\pi^{(\delta^*,{\ts}^*,{\ts}^*,{\ts}^*,\dots)}$ and let $F_{\mathcal{T}|\delta^*}$ be the conditional CDF of the TTE given $\mathcal{T}>\delta^*$. If the problem to compute the optimum sampling interval and phase is repeated at time $\delta^*$ using the conditional CDF, we get a new solution containing an optimum sampling interval $\tilde{\ts}^*$ and an optimum (second) offset $\tilde{\delta}^*$. 
As this updated set of sampling policy denoted by $\pi^{(\delta^*,\tilde{\delta}^*,\tilde{\ts}^*,\tilde{\ts}^*,\dots)}$ is a superset of the previously obtained sampling policy $\pi^{(\delta^*,{\ts}^*,{\ts}^*,{\ts}^*,\dots)}$, the new solution provides us with an energy penalty that is not worse than the one obtained before. 
However, due to the memory-less property of the exponential distribution we know that $$F_{\mathcal{T}}(t)=F_{\mathcal{T}|\mathcal{T}>k\delta^*}(t+k\delta^*),\;\forall k\in\mathbb{N}\;.$$
Therefore by using the conditional CDF, optimisation problem is essentially unchanged and the newly obtained solution will be numerically equal to the corresponding values obtained in the previous step; i.e., $\tilde{\ts}^*={\ts}^*$ and $\tilde{\delta}^*={\delta}^*$. By repeating this process supported by similar arguments, we get a sequence of non-increasing energy penalties. That is
\begin{alignat}{2}
\mathcal{E}\big(\pi^{(\ts ^{\#},\ts ^{\#},\dots)}\big)&\geq\mathcal{E}\big(\pi^{(\delta^*,\ts ^*,\ts ^*,\dots)}\big)\tag{as     ${\hat{\Pi}}\subseteq{\Pi}$} \nonumber\\
&\geq\mathcal{E}\big(\pi^{(\delta^*,\delta^*,\ts ^*,\ts ^*,\dots)}\big)\nonumber\\
&\vdots\nonumber\\
&\geq\mathcal{E}\big(\pi^{(\delta^*,\delta^*,\dots)}\big)\nonumber\\
&\geq\mathcal{E}\big(\pi^{(\ts ^{\#},\ts ^{\#},\dots)}\big)\tag{from \eqref{eq:eq1_Lemma:OffsetIsIrrelevant}}\\
\Rightarrow \mathcal{E}\big(\pi^{(\ts ^{\#},\ts ^{\#},\dots)}\big)&=\mathcal{E}\big(\pi^{(\delta^*,\delta^*,\dots)}\big)\;.\nonumber
\end{alignat}
Therefore, $\delta^*=\ts ^{\#}$ as a result of the convexity of $\mathcal{E}(\ts )$ from Proposition \ref{prop:convexity_rayleigh}. Thus, the optimum set of sampling instances under ${\Pi}$ is $\{t=k\ts ^*,\,k\in\mathcal{N}^+\}$, which is essentially the solution to $\hat{\mathcal{P}}$, proving that $\mathcal{P}\equiv\hat{\mathcal{P}}$.
\end{proof}

As a result of the equivalence of $\mathcal{P}$ and $\hat{\mathcal{P}}$, we do not benefit from proceeding with the relatively complex optimisation of \eqref{eq:energypenalty_exponential_offset} and we can use the solution from Proposition \ref{prop:exponentialSolution} to find the optimum energy penalty.

\subsection{Rayleigh Distributed TTE}
\begin{lemma}
The energy penalty $\mathcal{E}$ under Rayleigh distributed TTE is given by
\begin{equation}
    \mathcal{E}(\ts ,\delta)=(\alpha+\beta \ts )\sum_{k=0}^{\infty}e^{-(k\ts +\delta)^2/2\sigma^2} - \beta\sigma\sqrt{\tfrac{\pi}{2}}+\beta\delta+\alpha\;.\label{eq:energyPen_rayleigh_offset}
\end{equation}
\end{lemma}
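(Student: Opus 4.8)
The plan is to specialise the general-distribution expressions \eqref{Es_general_offset} and \eqref{Ew_general_offset} to the Rayleigh case, where $\bar{F}_{\mathcal{T}}(t)=e^{-t^2/2\sigma^2}$ and $F_{\mathcal{T}}(t)=1-e^{-t^2/2\sigma^2}$, and then assemble the penalty through \eqref{eq:epsilon_terminal}, namely $\mathcal{E}(\ts,\delta)=\alpha\,\mathbb{E}[\mathcal{S}]+\beta\,\mathbb{E}[\mathcal{W}]$. The expected number of samples falls out immediately: substituting the CCDF into \eqref{Es_general_offset} gives $\mathbb{E}[\mathcal{S}]=1+\sum_{k=0}^{\infty}e^{-(k\ts+\delta)^2/2\sigma^2}$, which already accounts for the $\alpha$ contribution (the leading $\alpha$ and the $\alpha$-weighted sum) in the claimed formula. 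So the substantive work is the wait time.

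Next I would substitute the Rayleigh CDF into \eqref{Ew_general_offset}. The integrand $F_{\mathcal{T}}(k\ts+\delta)-F_{\mathcal{T}}(k\ts+\delta-w)$ collapses to $e^{-(k\ts+\delta-w)^2/2\sigma^2}-e^{-(k\ts+\delta)^2/2\sigma^2}$. The second term is constant in $w$, so it integrates to $\ts\,e^{-(k\ts+\delta)^2/2\sigma^2}$, and its $k$-sum supplies exactly the factor multiplying $\beta\ts$ in the target. The first term is the delicate piece and the heart of the argument.

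The key step is a change of variables $u=k\ts+\delta-w$ in $\int_{0}^{\ts}e^{-(k\ts+\delta-w)^2/2\sigma^2}\dif w$, which rewrites it as $\int_{(k-1)\ts+\delta}^{k\ts+\delta}e^{-u^2/2\sigma^2}\dif u$. The crucial observation is that these integration windows are \emph{contiguous} across $k=0,1,2,\dots$, so summing over $k$ telescopes them into the single tail integral $\int_{\delta-\ts}^{\infty}e^{-u^2/2\sigma^2}\dif u$. Meanwhile the remaining boundary term $\int_{\ts}^{\delta}F_{\mathcal{T}}(\delta-w)\dif w$ from \eqref{Ew_general_offset}, under the substitution $v=\delta-w$, becomes $(\delta-\ts)-\int_{0}^{\delta-\ts}e^{-v^2/2\sigma^2}\dif v$. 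Adding the two leftover Gaussian fragments then covers the whole half-line, so they merge into $-\int_{0}^{\infty}e^{-u^2/2\sigma^2}\dif u=-\sigma\sqrt{\tfrac{\pi}{2}}$. The main obstacle is precisely this bookkeeping: recognising the telescoping of contiguous windows and matching the two half-line pieces so that they combine into the clean constant, rather than leaving residual error functions as in the no-offset derivation.

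Finally I would collect terms: the $\ts$ from \eqref{Ew_general_offset} and the $(\delta-\ts)$ from the boundary term add to $\delta$, yielding $\mathbb{E}[\mathcal{W}]=\ts\sum_{k=0}^{\infty}e^{-(k\ts+\delta)^2/2\sigma^2}-\sigma\sqrt{\tfrac{\pi}{2}}+\delta$. Substituting both expectations into $\mathcal{E}=\alpha\,\mathbb{E}[\mathcal{S}]+\beta\,\mathbb{E}[\mathcal{W}]$ and grouping the common sum under the coefficient $(\alpha+\beta\ts)$ produces \eqref{eq:energyPen_rayleigh_offset}. As a consistency check, setting $\delta=\ts$ makes the telescoped tail start at $0$ and must recover the no-offset penalty \eqref{eq:energyPen_rayleigh}, which confirms the constant $\beta\sigma\sqrt{\pi/2}$ and the extra $\beta\delta+\alpha$ terms are correctly placed.
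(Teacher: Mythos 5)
Your proposal is correct and follows essentially the same route as the paper: substituting the Rayleigh CDF/CCDF into \eqref{Es_general_offset} and \eqref{Ew_general_offset}, telescoping the contiguous Gaussian integration windows into a tail integral (the same device the paper uses via \textit{erf}/\textit{erfc} in the no-offset lemma), and combining with weights $\alpha$, $\beta$. The paper's own proof merely states the two resulting expectations without showing this bookkeeping, so your write-up is a faithful, more detailed version of the intended argument, including the implicit use of $\delta\geq\ts$ when merging the two Gaussian fragments into $\sigma\sqrt{\pi/2}$.
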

\begin{proof}
Substituting the CDF of the Rayleigh distribution, $F_{\mathcal{T}}(t)=1-e^{-t^2/2\sigma^2}$ in \eqref{Es_general_offset} or \eqref{Ew_general_offset} gives
\begin{alignat}{2}
\mathbb{E}[\mathcal{S}]&=1+\sum_{k=0}^{\infty}e^{-(k\ts +\delta)^2/2\sigma^2}\label{Es_rayleigh_offset}\\
\mathbb{E}[\mathcal{W}]&=\delta+\ts \sum_{k=0}^{\infty}e^{-(k\ts +\delta)^2/2\sigma^2}-\sigma\sqrt{{\dfrac{\pi}{2}}}\;.\label{Ew_rayleigh_offset}
\end{alignat}
The energy penalty $\mathcal{E}(\ts ,\delta)$ can be computed by adding $\mathbb{E}[\mathcal{S}]$ and $\mathbb{E}[\mathcal{W}]$ using the weights $\alpha$ and $\beta$, respectively.
\end{proof}
\begin{algorithm}[t]
\SetAlgoLined
 Initialise the upper limit of optimising variables, $\ts ^{(\text{max})}$ and $n^{(\text{max})}$;\\
 Initialise stopping criterion $\xi\,$;\\
 $n^{(\text{L})}\gets1\,;\;n^{(\text{H})}\gets n^{(\text{max})}\,$;\\
\For{$i\leftarrow 0$ \KwTo $\ceil{\log _2(n^{(\text{max})}-1)}$}{
  $n\gets (n^{(\text{L})}+n^{(\text{H})})/2\,$;\\
  $\ts ^{(\text{L})}=0;\;\ts ^{(\text{H})}=\ts ^{(\text{max})}\,$;\\
  \While{$\ts ^{(\mathrm{H})}-\ts ^{(\mathrm{L})}>\xi$}{
    $\ts \gets(\ts ^{(\text{L})}+\ts ^{(\text{H})})/2\,$;\\
    \eIf{$\tfrac{\partial}{\partial \ts }\mathcal{E}(\ts ,n\ts )\geq0$}{
   $\ts ^{(\text{H})}\gets \ts \,$;
   }{
   $\ts ^{(\text{L})}\gets \ts \,$;
  }
  }
\eIf{$\mathcal{E}(\ts ,n\ts )-\mathcal{E}(\ts ,(n-1)\ts )\geq0$}{
   $n^{(\text{H})}\gets n\,$;
   }{
   $n^{(\text{L})}\gets n\,$;
  }
 }
 $\ts ^*=\ts \,;\;\delta^*=n\ts ^*\,;$
 \caption{Algorithm to find optimum sampling interval $\ts ^*$ and optimum offset $\delta^*$ with the constraint $\delta=n\ts ,\, n\in\mathbb{N}^+$.}\label{Algo}
\end{algorithm}
Proving the convexity of $\mathcal{E}(\ts ,\delta)$ in the two variables $\ts $ and $\delta$ is hard. The difficulty arises as we need to prove that the Hessian is non-negative and showing this for the infinite sum is highly non-trivial. Furthermore, the proof that we used in Lemma \ref{Lemma:PoissonSum} cannot be extended as the function is not even in $\ts $ after the addition of the second variable $\delta$. 
Instead, we will consider a sub-optimal solution by assuming that the offset is an integer multiple of the sampling interval. We will now show that the energy penalty is convex in $\ts $ for a fixed offset of the form $\delta=n\ts$.

\begin{lemma}\label{Lemma:DiscreteConvexity}
The function $f_n(t)=(a+bt)\sum_{k=0}^{\infty}e^{-((k+n)t)^2/2\sigma^2}$ is convex in $t$ for $n\in \mathbb{N}\;$.
\end{lemma}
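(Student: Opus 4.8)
The plan is to exploit the linearity of $f_n$ in the bracket $(a+bt)$ and reduce the claim to two simpler convexity statements, each handled by the same two-regime argument already used for Lemmas \ref{Lemma:ConvexityOfEs} and \ref{Lemma:ConvexityOfEw}. Writing $S_n(t):=\sum_{k=0}^{\infty}e^{-((k+n)t)^2/2\sigma^2}=\sum_{j=n}^{\infty}e^{-j^2t^2/2\sigma^2}$, we have $f_n(t)=a\,S_n(t)+b\,\big(t\,S_n(t)\big)$. Since $a=\alpha\ge0$ and $b=\beta>0$ from \eqref{eq:alphaBeta}, it suffices to prove that $S_n$ and $t\,S_n$ are each convex on $(0,\infty)$; a non-negative combination of convex functions is then convex.

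For the large-$t$ regime I would differentiate termwise. For $S_n$ one gets $S_n''(t)=\sum_{j=n}^{\infty}\frac{j^2}{\sigma^4}(j^2t^2-\sigma^2)e^{-j^2t^2/2\sigma^2}$, and since the bracket increases in $j$, every term is non-negative as soon as the smallest frequency $j=n$ satisfies $t\ge\sigma/n$; hence $S_n$ is convex for $t\ge\sigma/n$. The identical computation for $t\,S_n$ (as in Lemma \ref{Lemma:ConvexityOfEw}) replaces $\sigma^2$ by $3\sigma^2$ and yields convexity for $t\ge\sqrt3\,\sigma/n$.

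For the small-$t$ regime I would invoke Lemma \ref{Lemma:PoissonSum}, but applied after peeling off the first $n$ terms: $S_n(t)=\frac{\sqrt{2\pi}\,\sigma}{t}\sum_{k=0}^{\infty}e^{-2\pi^2k^2\sigma^2/t^2}-\sum_{k=0}^{n-1}e^{-k^2t^2/2\sigma^2}$. The first term is exactly the transformed sum whose second derivative was shown non-negative for $t\le2.9\,\sigma$ in Lemma \ref{Lemma:ConvexityOfEs}. The crucial sign observation is that the subtracted finite Gaussians \emph{help}: $\frac{d^2}{dt^2}\big(-e^{-k^2t^2/2\sigma^2}\big)=\frac{k^2}{\sigma^4}(\sigma^2-k^2t^2)e^{-k^2t^2/2\sigma^2}\ge0$ whenever $t\le\sigma/k$, so the whole correction is convex for $t\le\sigma/(n-1)$. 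Thus $S_n$ is convex on $(0,\min(2.9\sigma,\sigma/(n-1))]$, and the same scheme with the $3.6\,\sigma$ bound from Lemma \ref{Lemma:ConvexityOfEw} and the threshold $\sqrt3\,\sigma/(n-1)$ gives convexity of $t\,S_n$ on the corresponding small-$t$ interval.

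It remains to check that the small-$t$ and large-$t$ validity intervals cover all of $(0,\infty)$. For $S_n$ this is the chain $\sigma/n<\sigma/(n-1)\le2.9\,\sigma$ (and analogously $\sqrt3\,\sigma/n<\sqrt3\,\sigma/(n-1)\le3.6\,\sigma$ for $t\,S_n$), so the two regions overlap and their union is $(0,\infty)$; the degenerate cases $n=0$ (empty correction, reducing to Lemmas \ref{Lemma:ConvexityOfEs}–\ref{Lemma:ConvexityOfEw}) and $n=1$ (constant correction) are immediate. I expect the main obstacle to be precisely this covering step: one must verify that after peeling off $n$ terms the Poisson-side interval $(0,\sigma/(n-1)]$ and the direct-side interval $[\sigma/n,\infty)$ still overlap for every $n$, which is what makes the sign of the subtracted finite sum — convex exactly in the small-$t$ range where the direct test fails — the linchpin of the argument.
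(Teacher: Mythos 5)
Your proof is correct, and it stays within the paper's framework (it leans on Lemmas \ref{Lemma:PoissonSum}, \ref{Lemma:ConvexityOfEs} and \ref{Lemma:ConvexityOfEw} exactly as the paper does), but it takes a genuinely different route from the paper's own proof. The paper never re-runs the direct/Poisson two-regime analysis for the tail sums: it notes that $f_n$ is obtained from $f_0$ by deleting the leading terms, invokes Proposition \ref{prop:convexity_rayleigh} to get $0\le f_0''(t)=\sum_{k\ge 0}\tfrac{k^2}{\sigma^4}\bigl(k^2t^2(a+bt)-\sigma^2(a+3bt)\bigr)e^{-k^2t^2/2\sigma^2}$, and then observes that the bracket --- and hence the sign of each term --- is non-decreasing in $k$. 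Deleting a prefix therefore cannot turn the sum negative: either every deleted term is negative, so $f_n''\ge f_0''\ge 0$, or some deleted term is non-negative, in which case all surviving terms are non-negative. That single sign-monotonicity remark does the work of your $n$-dependent thresholds $\sigma/n$, $\sqrt{3}\,\sigma/n$, $\sigma/(n-1)$, $\sqrt{3}\,\sigma/(n-1)$ and of the interval-covering verification that you rightly flag (and correctly carry out) as the crux of your argument. What your route buys in exchange: explicit convexity regions for $S_n$ and $tS_n$ separately, a structure exactly parallel to Lemmas \ref{Lemma:ConvexityOfEs} and \ref{Lemma:ConvexityOfEw}, and an explicit appeal to the hypothesis $a,b\ge 0$, which the paper also needs but leaves implicit (monotonicity of the bracket in $k$ requires $a+bt\ge 0$). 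Neither argument is more elementary --- both ultimately rest on the same upstream Poisson machinery --- but the paper's is markedly shorter; note, for calibration, that the paper's phrasing that the ``terms'' of $f_0''$ are non-decreasing in $k$ is loose (the terms decay to zero as $k\to\infty$), and the statement that actually carries the argument is the monotonicity of the sign, i.e.\ of the bracket, which your covering construction sidesteps entirely.
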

\begin{proof}
See that $f_n(t)$ for any $n$ is obtained by removing first $n-1$ terms from $f_0(t)$. From Proposition \ref{prop:convexity_rayleigh}, we know that $f_0(t)$ is convex. 
\begin{equation*}
    \Rightarrow 0\leq f_0''(t)=\sum_{k=0}^{\infty}\dfrac{k^2t^2(a+bt)-s^2(a+3bt)}{\sigma^2} k^2e^{{-(k+n)^2t^2}/{2\sigma^2}}\;.
\end{equation*}
As the terms in $f_0''(t)$ are non-decreasing with $k$, removing first $n-1$ terms will not affect the sign of the sum and therefore $f_n(t)$ is also convex.
\end{proof}

We can see with the help of \eqref{eq:jacobi} and \eqref{eq:energyPen_rayleigh_offset} that the energy penalty is the sum of a linear function and a continuously decreasing and convex Jacobi theta function. This shows the presence of a single minima with respect to $\delta$ for any fixed $\ts $. 
With this information and the convexity in $\ts $ for any fixed $\delta$ (Lemma \ref{Lemma:DiscreteConvexity}), we use Algorithm \ref{Algo} to find the sampling interval and offset that attains a minimal energy penalty while satisfying the constraint $\delta=n\ts ,\;n\in\mathbb{N}$. In the next section, we will compare this energy penalty to $\mathcal{E}^*$ obtained using a brute force search and show that Algorithm \ref{Algo} attains near-optimality. 
As a direct result of this near-optimality, we will reuse the same notations $(\cdot)^*$ to represent the solution obtained using Algorithm \ref{Algo}, unless otherwise necessary.

\section{Numerical Results}\label{sec:numerical}
In this section, we present the numerical results to illustrate the behaviour of the energy penalty, the gain achieved by the proposed design, and their dependency on the system parameters. 
We consider two parameter settings that are motivated by the characteristics of the CPS and the VAS. We start with describing the characterisation of both of these systems and the underlying distribution of the observed process. Afterwards, we compare the proposed design(s) with a baseline sampling policy that is used in practice. Furthermore, to verify the near-optimality of Algorithm \ref{Algo} in solving $\mathcal{P}$, we compare its solution with that of a brute force method. For the CPS with an exponentially distributed TTE, the offset is irrelevant and hence we are not using Algorithm \ref{Algo}. As a result, this comparison with a brute-force solution is required only for the VAS but not for the CPS.
\subsection{Parameter Settings}\label{prelim}
\begin{figure*}[t]
\centering
\begin{subfigure}{0.32\linewidth}
\centerline{\includegraphics[width=\linewidth]{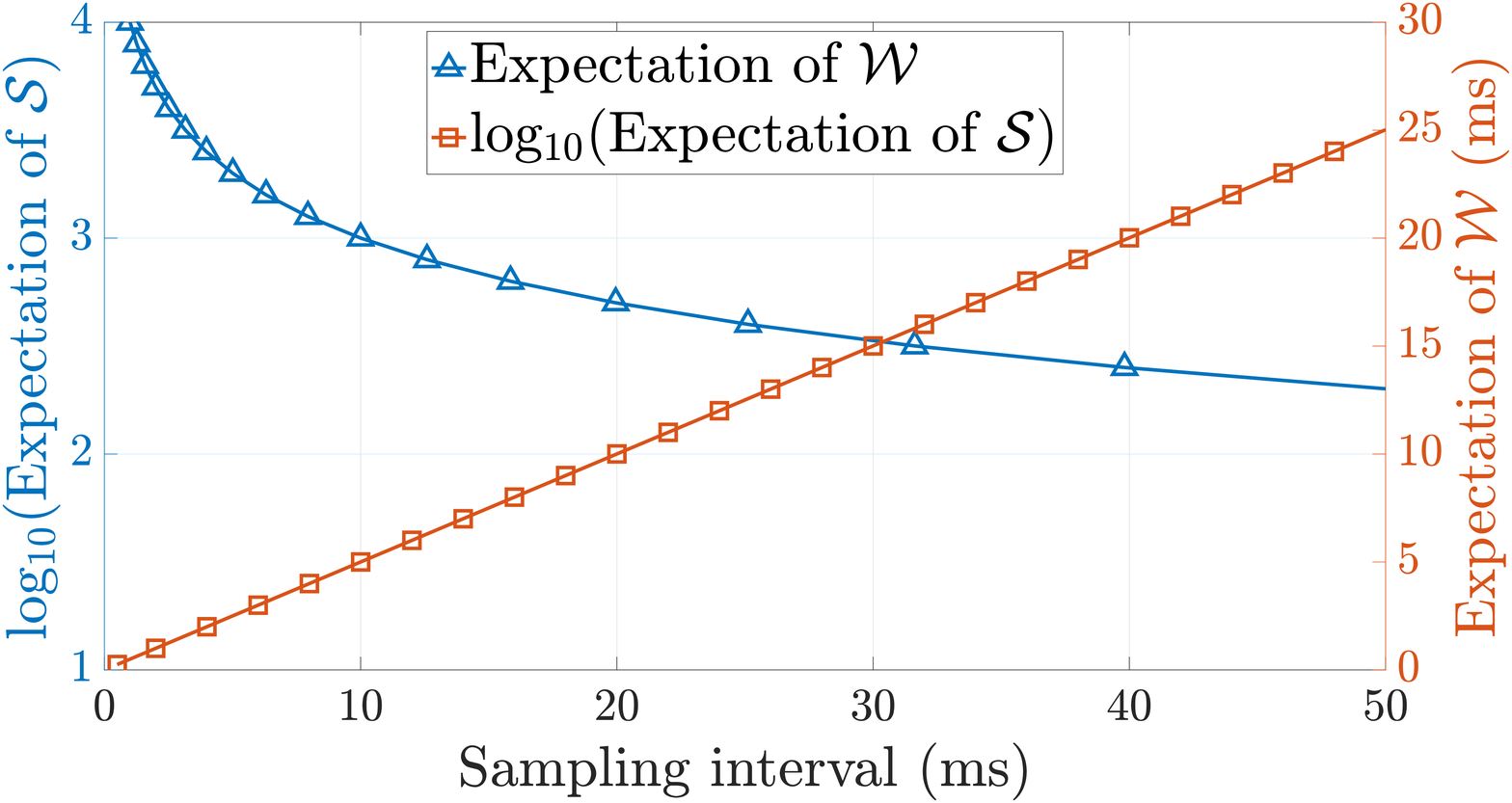}}
    \caption{$\mathrm{log}_{10}(\mathbb{E}[\mathcal{S}]),\,\mathbb{E}[\mathcal{W}]$ vs. $T_s$.}
\label{fig:e1}
\end{subfigure}%
~
\begin{subfigure}{0.32\linewidth}
\centerline{\includegraphics[width=\linewidth]{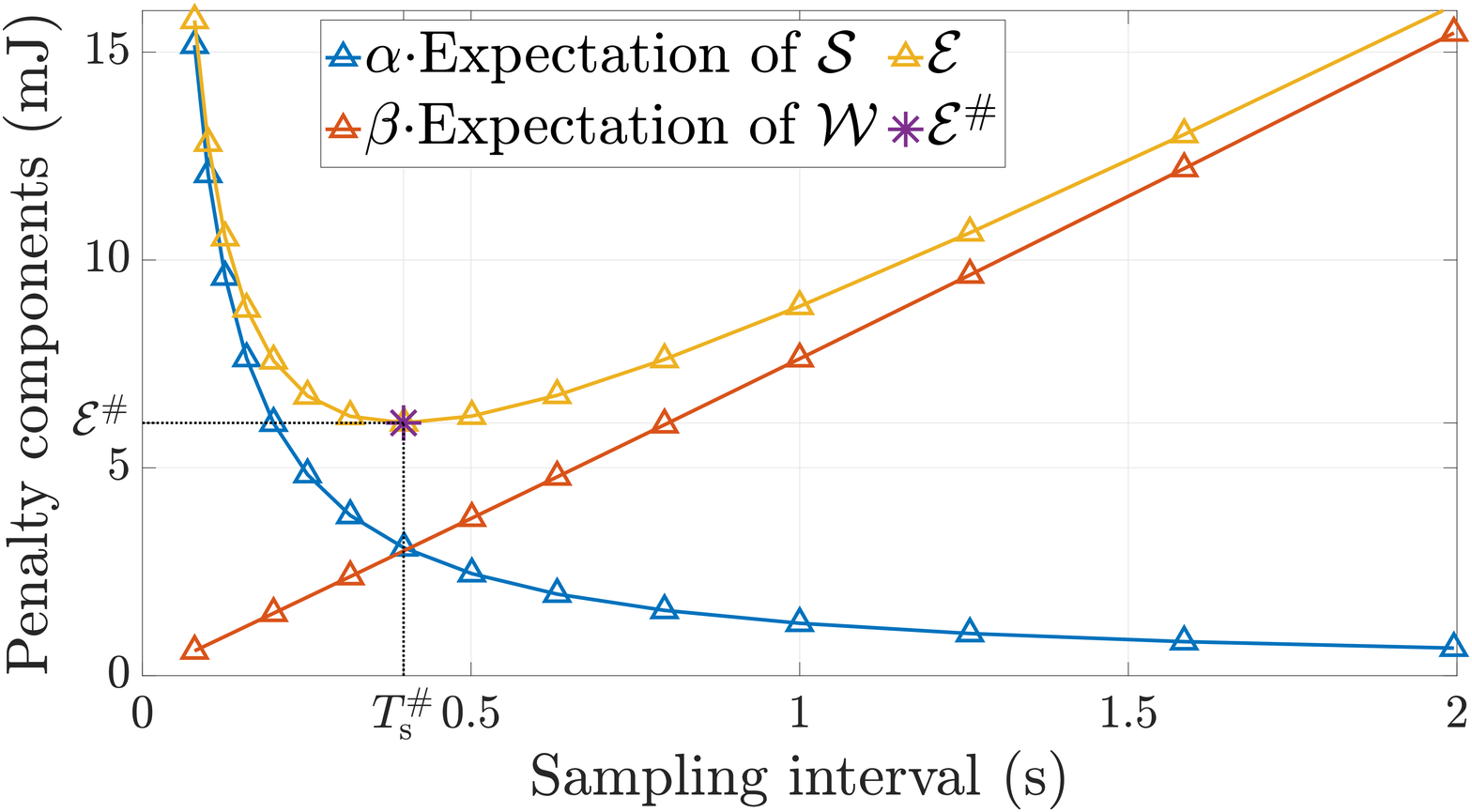}}
\caption{Penalty components, $\mathcal{E}$ vs. $\ts$.}
\label{fig:e2}
\end{subfigure}%
~
\begin{subfigure}{0.32\linewidth}
\centerline{\includegraphics[width=\linewidth]{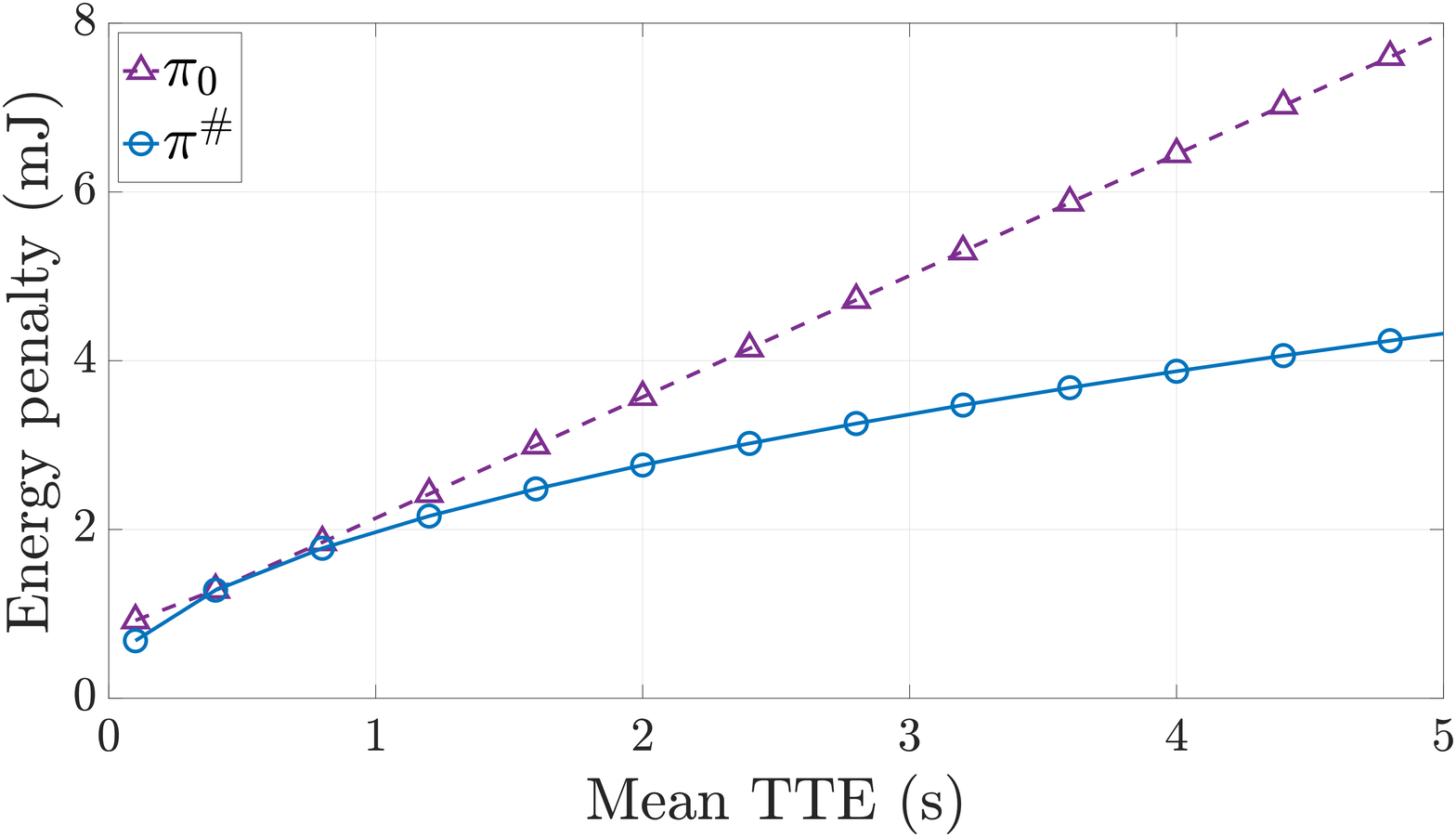}}
\caption{$\mathcal{E}$ vs. $\mathbb{E}[\mathcal{T}]$.}
\label{fig:e3}
\end{subfigure}
\caption{Expected number of samples $\mathbb{E}[\mathcal{S}]$, expected wait $\mathbb{E}[\mathcal{W}]$, energy penalty $\mathcal{E}$ and its components plotted against the sampling interval $\ts$ and mean TTE $\mathbb{E}[\mathcal{T}]$ for the CPS. Proposed policy $\pi^\#$ is compared with the baseline policy $\pi_0$.}
\label{fig:exp}
\end{figure*}
\begin{figure*}[ht]
\centering
\begin{subfigure}{0.32\linewidth}
\centerline{\includegraphics[width=\linewidth]{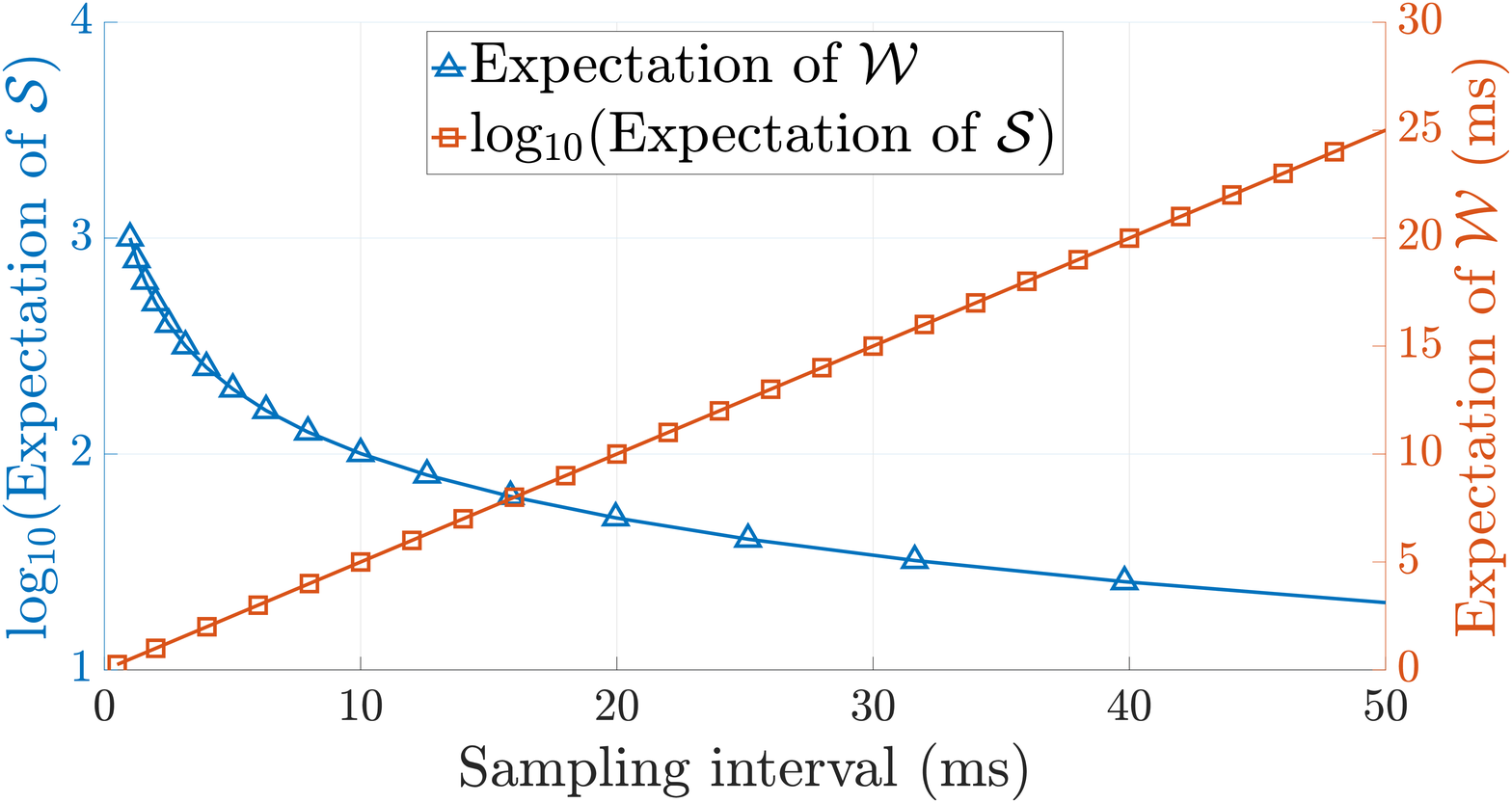}}
\caption{$\mathrm{log}_{10}(\mathbb{E}[\mathcal{S}]),\,\mathbb{E}[\mathcal{W}]$ vs. $T_s$.}
\label{fig:r1}
\end{subfigure}%
~
\begin{subfigure}{0.32\linewidth}
\centerline{\includegraphics[width=\linewidth]{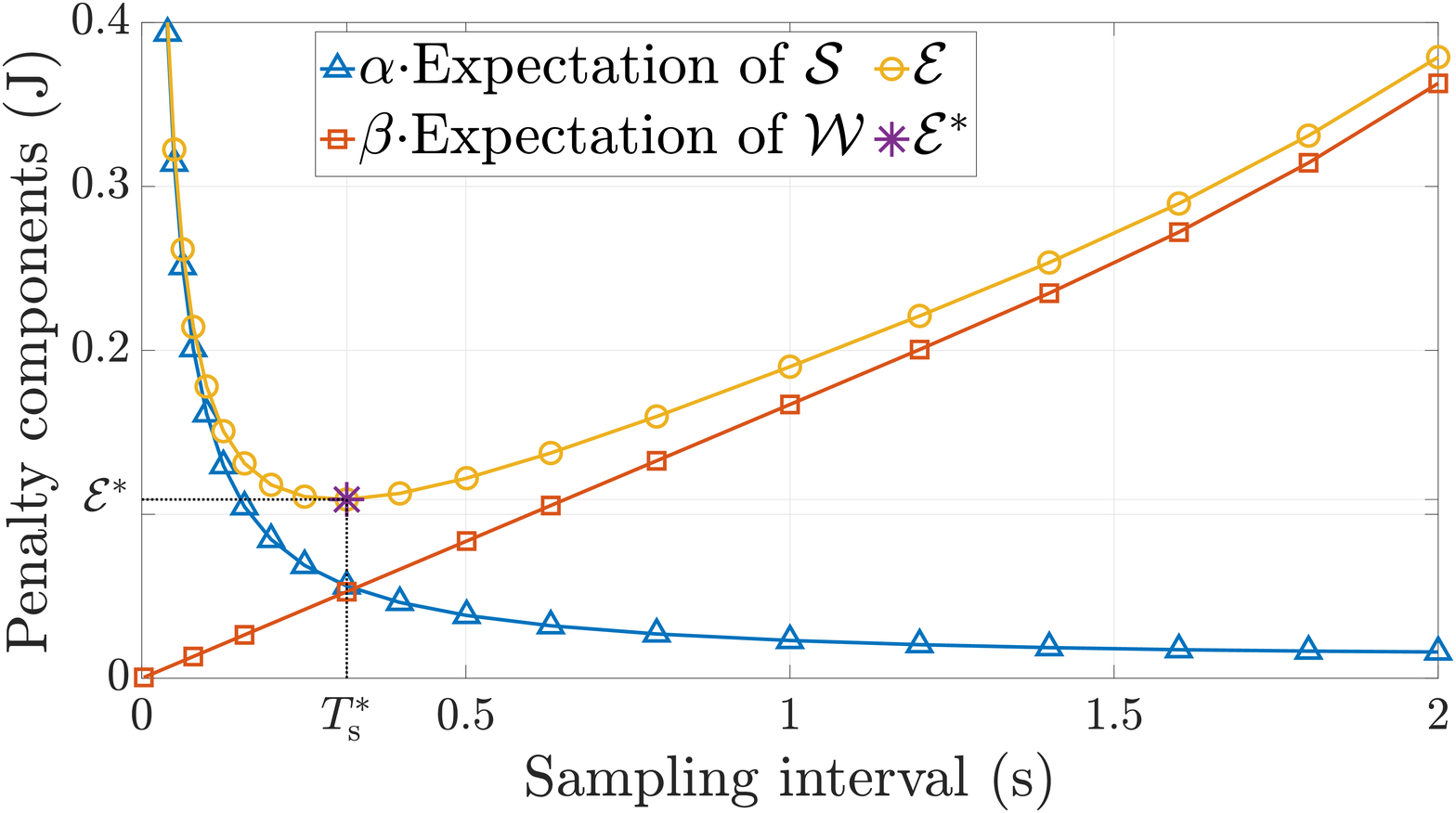}}
\caption{Penalty components, $\mathcal{E}$ vs. $\ts$.}
\label{fig:r2}
\end{subfigure}%
~
\begin{subfigure}{0.32\linewidth}
\centerline{\includegraphics[width=\linewidth]{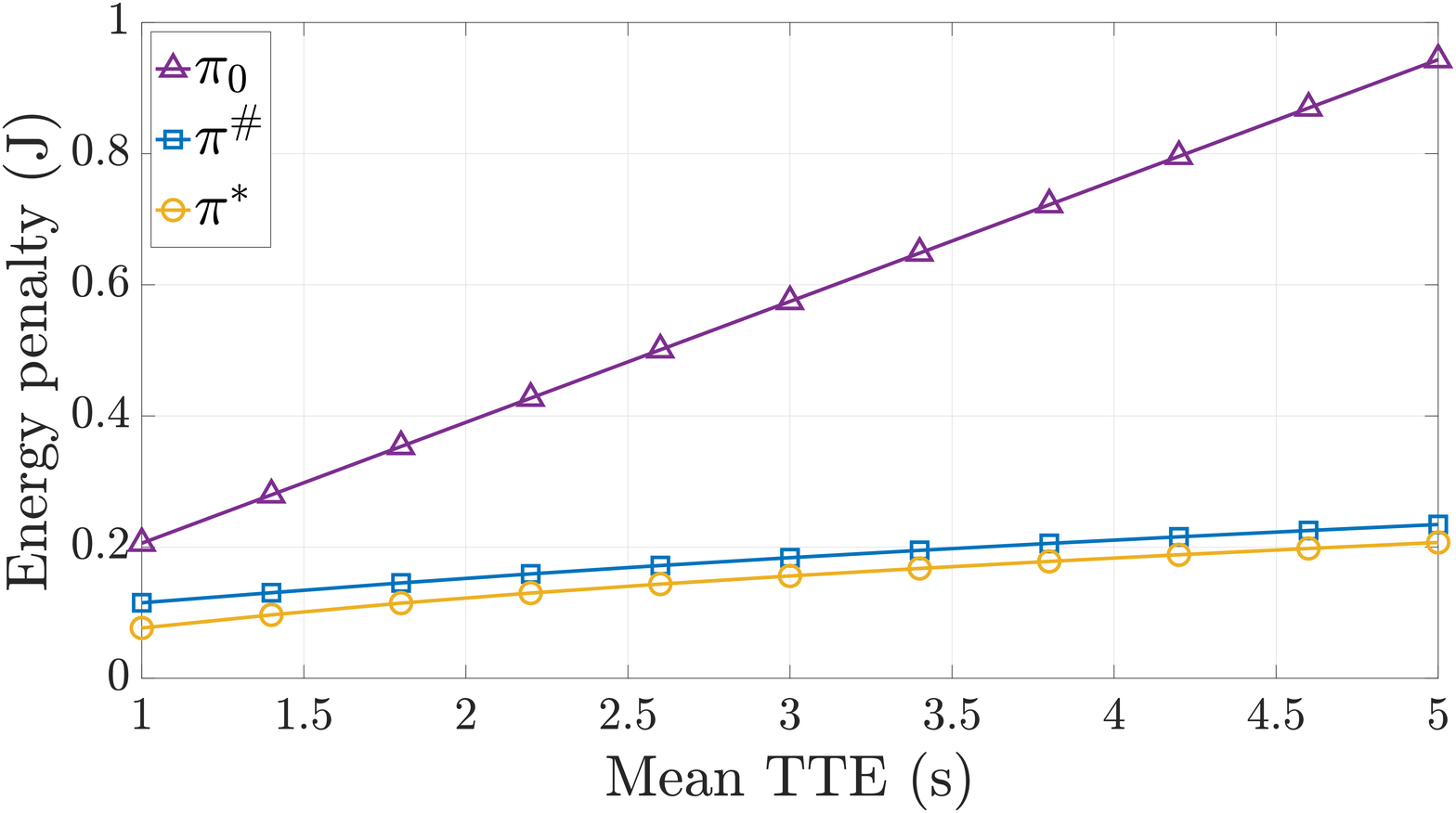}}
\caption{$\mathcal{E}$ vs. $\mathbb{E}[\mathcal{T}]$.}
\label{fig:r3}
\end{subfigure}
\caption{Expected number of samples $\mathbb{E}[\mathcal{S}]$, expected wait $\mathbb{E}[\mathcal{W}]$, energy penalty $\mathcal{E}$ and its components vs. the sampling interval $\ts$ and mean TTE $\mathbb{E}[\mathcal{T}]$ for the VAS. Proposed policies $\pi^\#$ and $\pi^*$ are compared with the baseline policy $\pi_0$.}
\label{fig:rayleigh}
\end{figure*}
For the CPS, we consider a failure detection system where the TTE is exponentially distributed with a mean of 10s. To detect the events (i.e., failures for a CPS), the terminal samples the process and sends the sample to the back-end via a low-power, low-throughput transmitter. For the device, we refer to the wireless sensor network characterisation in \cite{PowConsumptionAssessmentinWSN} with the data size fixed at 127 Bytes. For communication technology, we choose 802.15.4 operating in the Sub1GHz ISM band with an approximate throughput of 250kbps, resulting in a communication delay of 4ms. The device is assumed to work at 3V drawing a current of 15mA during communication and 5mA during idling. We also assume that the processing time of the successful sample $\tau_{\text{s}}=5$ms. 

On the other hand, a VAS typically need to transmit video frames, thus requiring a larger throughput. We refer to the WCA Lego experiment \cite{TowardsWCA} where a set of tasks performed by a human user are monitored to detect the task completion by taking snapshots of the progress. The frame size used in the experiment is roughly 300kB ($640\times 480$ resolution) and the mean task time observed is 4.846s. The processing time observed at the back end is approximately 525ms. For the terminal device, we consider a Google Glass using an 802.11ax transmitter providing a data rate of 400Mbps thus resulting in a 5.85ms communication delay for each snapshot. The Google Glass typically takes 334mW during \textit{active/screen-off} and 2960mW during \textit{video chat}\cite{googleGlass}. Thus, we assume these power figures as the idle power and the communication power, respectively. On top of this, we also assume a minimum possible task time $t_{\text{min}}$ (response time of the human user) of 0.5s. The parameters used for the CPS and the VAS are presented in TABLE \ref{tab:Example}.
\begin{table}[htb]
    \centering
    \resizebox{\columnwidth}{!}{
    \begin{tabular}{|c||c|c|c|c|c|c|c|}
    \hline
         &$P_0$&$P_\text{c}$&$\tau_\text{c}$&$\tau_{\text{s}}$&$\nicefrac{\beta}{\alpha}$&$\mathbb{E}[\mathcal{T}]$&$t_{\text{min}}$  \\ \hline\hline
         CPS&15mW&45mW&4ms&5ms&125&10s&--\\ \hline
         VAS&334mW&2.96W&5.85ms&525ms&21.7&4.84s&0.5s\\ \hline
    \end{tabular}
    }
    \caption{Parameters used for the cyber-physical system(CPS) and the video analytics system (VAS).}
    \label{tab:Example}
\end{table}

Even though our solution does not depend on the type of energy source, we assume that the CPS and the VAS are powered by a fixed battery pack of arbitrary capacity. As battery life improvement is one of the directly observable results of energy saving, we evaluate the performance of the proposed solution in terms of percentage increase in battery life. Note that as we use a percentage increase, the exact capacity of the battery is irrelevant.
\begin{table}[htb]
\centering
       \begin{tabular}{|l|l||l|l|l|l|}
            \hline
            &&$\pi_0$&${\pi^\#}$&$\pi^*$&$\pi_\text{b}$\\ \hline \hline        
            CPS& $\ts$&$83.3$ ms&$\ts^\#$&-&-\\\hline
 \multirow{2}{*}{VAS} & $\ts$ &$83.3$ ms&$\ts^\#$&$\ts^*$&Brute\\\cline{2-5}
            & $\delta$ & -&-&$\delta^*$&Force\\
              \hline
        \end{tabular}
    \caption{Various sampling policies considered for comparison and their corresponding sampling interval and offset.}
    \label{tab:methods}
\end{table}

For the performance evaluation of the proposed policies $\pi^\#$ and $\pi^*$, we consider a baseline periodic sampling policy denoted by $\pi_0$ as shown in TABLE \ref{tab:methods}. 
For selecting the baseline sampling rate for the VAS, we take a hint from the WCA system \cite{TowardsWCA} that motivated the system model. The authors use a video camera to capture the monitored process that runs at a frame rate of 24fps which ideally results in a sampling interval of $41.6$ms. However, whenever the processing of the discarded samples takes more than $41.6$ms, the next sample is delayed accordingly. As a result, the mean sampling interval is observed to be around $83.3$ms thus justifying a baseline with sampling interval anywhere in between $41.6$ms and $83.3$ms.
That being said, we will see in the next subsection that the optimum sampling interval for the considered VAS is around $300$ms.
As a result, even though our model does not take samples by considering the processing time, we take our baseline sampling to be $83.3$ms to show the minimum performance improvement when the baseline is closest to the optimum
We use the same baseline sampling interval for the CPS for simplicity in comparison. 

To verify the near-optimality of $\pi^*$ obtained using Algorithm \ref{Algo}, we compare it with that of a brute force solution of $\mathcal{P}$. To find this brute force solution, we divide the practically feasible domain of the two-dimensional plane formed by $\ts$ and $\delta$ into grids of size $10^{-3}\times10^{-3}\;\mathrm{s}^2$ and search for the optimum solution (also known as grid search).

As the explanations to some of the figures of the CPS and the VAS involve similar arguments, we explain them concurrently whenever it is convenient.
Also, note that the y-axes of some of these figures are kept different intentionally for better visualisation.
The parameters for all the plots are taken from TABLE \ref{tab:Example} and TABLE \ref{tab:methods}, with a potential exception to the variable parameter under discussion in a particular plot. 

\subsection{Energy Penalty}
In Fig. \ref{fig:e1} and Fig. \ref{fig:r1} we present the the expected number of samples $\mathbb{E}[\mathcal{S}]$ and the expected wait $\mathbb{E}[\mathcal{W}]$ by varying $\ts$ for the CPS and the VAS, respectively. 
For comparison, at $\ts=10$ms, the CPS and the VAS expect 1000 and 100 samples per event, respectively. The expected wait however is approximately the same at around $4$ms.
As noted in Sections \ref{sec:analysis} and \ref{sec:analysis2}, $\mathbb{E}[\mathcal{S}]$ and $\mathbb{E}[\mathcal{W}]$ show opposing behaviour with an increase in $\ts$. 
After a very rapid decrease, $\mathbb{E}[\mathcal{S}]$ goes asymptotically to $1$ with increasing $\ts$. This points to the single sample that is ideally required for the event detection.
On the other hand, $\mathbb{E}[\mathcal{W}]$ shows a gradual (approximately linear) increase with $\ts$. For instance, doubling the sampling interval doubles the wait. 
For a VAS operating at a sampling frequency of 0.5s, this corresponds to approximately $50\%$ increase in the TTF. Recall that TTF is the effective delay experienced by the human user. 

In Fig. \ref{fig:e2} and Fig. \ref{fig:r2}, we show the energy penalty and its components as a function of $\ts$. 
Observe $\mathbb{E}[\mathcal{S}]$ and $\mathbb{E}[\mathcal{W}]$ exhibit an opposing behaviour with a change in $\ts$ resulting in an energy penalty minima in their weighted sum. This minima and the corresponding $\ts$ is also marked in the figures. The qualitative behaviour of the energy penalty and its components is similar for the CPS and the VAS with a minimum value of $6$mJ and $100$mJ, respectively. Another important observation is the change in $\mathcal{E}$ as a result of a deviation of $\ts$ from the optimum. As $\mathcal{E}$ increases much more rapidly with a negative deviation than with a positive deviation, it is particularly important to avoid any negative errors in calculating this optimum. That is, an oversampling can cost much more than an undersampling by a similar amount. These errors can arise due to an incorrect estimation of the TTE statistics or insufficient convergence of the optimisation algorithm.

\subsection{Energy Saving}
In Fig. \ref{fig:e3} and Fig. \ref{fig:r3} we show energy penalty incurred for the CPS and the VAS as a function of the mean TTE for various sampling policies from TABLE \ref{tab:methods}.
Since the mean TTE value is application-specific, plotting energy penalty across mean TTE values provides an insight into how the behaviour of the particular CPS and VAS scales across application scenarios with different parameterisation.
The energy penalty reduction achieved by the proposed policies is clear from these figures. Note that when the sampling interval of $83.3$ms is smaller than $\ts^\#$ or $\ts^*$, the penalty is dominated by the discarded samples. As a result, the difference in $\mathcal{E}$ increases with an increase in mean TTE. This increase is much more prominent for the VAS.

The increase in battery life is a more direct measure of energy saving from an application perspective. In Fig. \ref{fig:batterlife} we present this by plotting the percentage increase in the expected battery life of the terminal. In Fig. \ref{fig:batterlife_baseline1} we choose the baseline $\pi_0$ whereas in Fig. \ref{fig:batterlife_baseline2} we choose a different baseline sampling interval of $2$s. We can see that irrespective of the baseline or the application considered, there is an observable increase in battery life. Particularly for the VAS, the proposed sampling policy improves the battery life by $36\%$ over $\pi_0$ at the considered mean TTE of 4.84s.

The motivation to consider two baselines is the value of their sampling interval relative to the optimum sampling interval (for the TTE range considered). 
The sampling interval of $0.83$ms is much smaller, and the sampling interval 2s is much larger than the optimal sampling intervals. Therefore, the former results in oversampling and the latter results in undersampling which results in shifting the dominating component of the energy penalty from  $\mathbb{E}[\mathcal{S}]$ to $\mathbb{E}[\mathcal{W}]$. As a result, the increase in battery life is a increasing function of the mean TTE in Fig. \ref{fig:batterlife_baseline1}, whereas it is a decreasing function in Fig. \ref{fig:batterlife_baseline2}. 
Note that the optimum sampling interval is positively correlated with the mean task time and the battery life is inversely proportional to the energy. 
Therefore, this change of percentage increase in battery life is in alignment with the variation of $\mathbb{E}[\mathcal{S}]$ and $\mathbb{E}[\mathcal{W}]$ with $\ts$ observed earlier in Fig. \ref{fig:exp} and Fig. \ref{fig:rayleigh}.
\begin{figure}[htb]
\centering
\begin{subfigure}{0.8\linewidth}
\centering
\includegraphics[width=\linewidth]{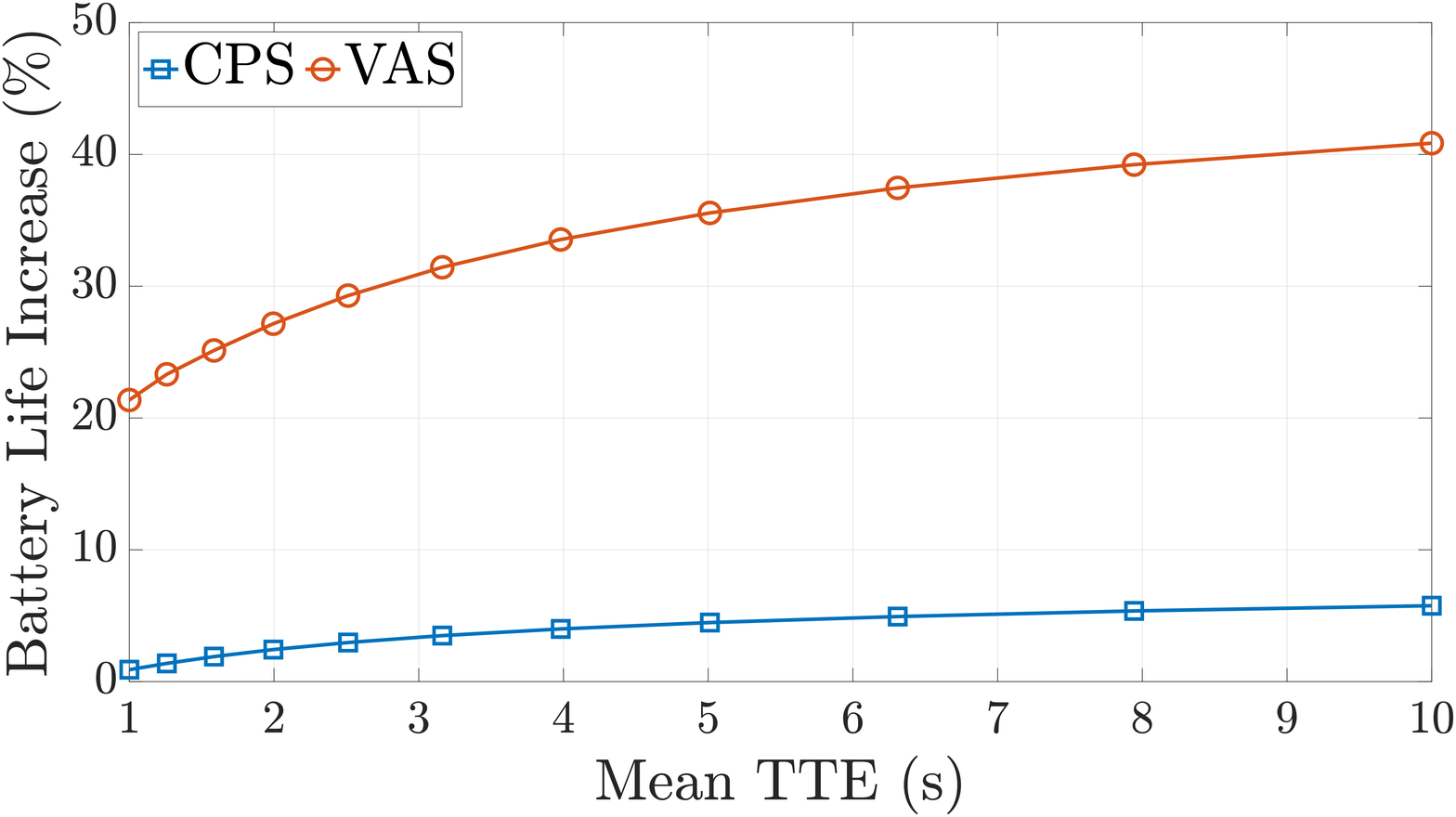}
\caption{Baseline policy $\pi_0$}
\label{fig:batterlife_baseline1}
\end{subfigure}
\begin{subfigure}{0.8\linewidth}
\centering
\includegraphics[width=\linewidth]{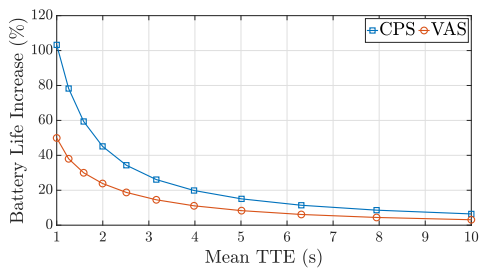}
\caption{Baseline policy with $\ts=2$s.}
\label{fig:batterlife_baseline2}
\end{subfigure}%
\caption{Percentage increase in battery life achieved by the policy $\pi^*$ vs. mean TTE $\mathbb{E}[\mathcal{T}]$ for the CPS and the VAS. The increase is calculated by considering two baselines policies: (a) $\pi_0$, and (b) a policy with sampling interval $\ts=2$s.}
\label{fig:batterlife}
\end{figure}
\begin{figure}[htb]
\centering
\begin{subfigure}{0.8\linewidth}
\centering
\includegraphics[width=\linewidth]{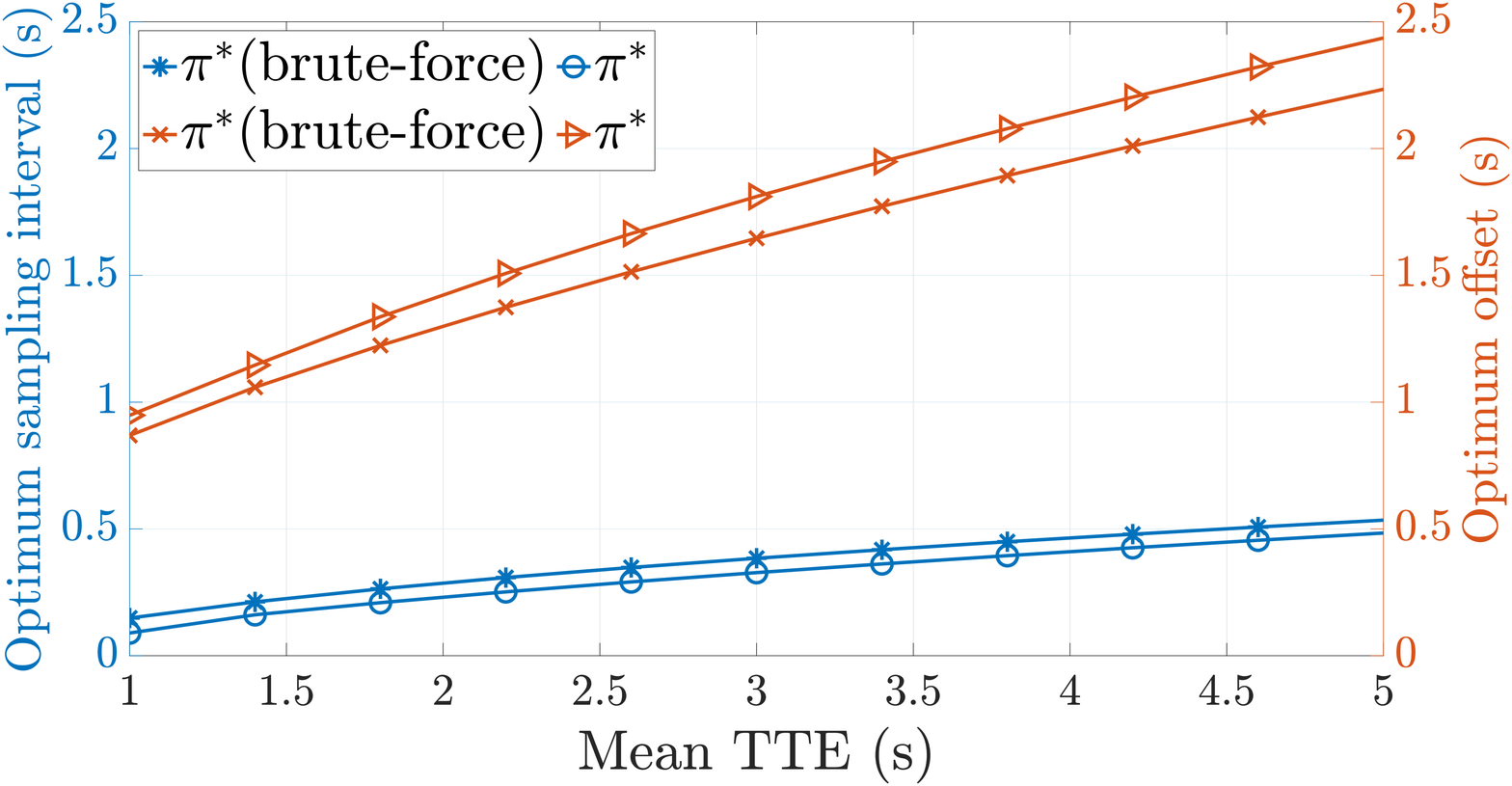}
\caption{$\ts^*,\;\delta^*$ vs. $\mathbb{E}[\mathcal{T}]$.}
\label{fig:bruteforceCOmparison_1}
\end{subfigure}
\begin{subfigure}{0.8\linewidth}
\centering
\includegraphics[width=\linewidth]{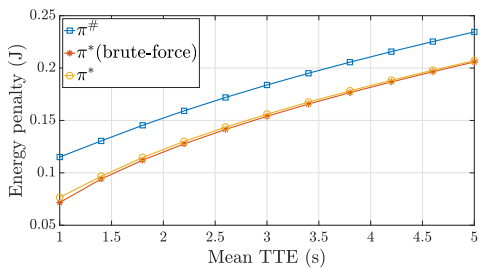}
\caption{$\mathcal{E}$ vs. $\mathbb{E}[\mathcal{T}]$.}
\label{fig:bruteforceCOmparison_2}
\end{subfigure}%
\caption{Optimum sampling interval $\ts^*$, optimum offset $\delta^*$ and the corresponding energy penalty $\mathcal{E}^*$ for various sampling policies vs. the mean TTE $\mathcal{E}[\mathcal{T}]$ for the VAS.}
\label{fig:bruteforceCOmparison}
\end{figure}
\subsection{Comparison of Sampling Policies}
We saw the battery life improvement provided by $\pi^*$ over $\pi_0$ earlier in Fig. \ref{fig:batterlife}. Even though we have two solutions -- $\pi^\#$ and $\pi^*$ -- for the VAS, we considered only the better performing $\pi^*$ in that figure.
Now in \ref{fig:bruteforceCOmparison}, we compare the performance of the two policies and see the benefit provided by adding the offset $\delta$. We also compare the proposed $\pi^*$ with the brute force solution discussed in Section \ref{prelim}. 
In Fig. \ref{fig:bruteforceCOmparison_1}, we compare the optimum sampling interval and offset obtained by Algorithm \ref{Algo} and brute-force, 
whereas Fig. \ref{fig:bruteforceCOmparison_2} compares the optimum energy penalty attained by $\pi^\#$ and $\pi^*$. From Fig. \ref{fig:bruteforceCOmparison_1}, we can see that there is an observable (albeit small) difference in $\ts^*$ (and $\delta^*$) obtained by the algorithm and the brute-force. Nonetheless, Fig. \ref{fig:bruteforceCOmparison_2} shows that the difference in energy penalty generated as a result of this difference is very minimal, thus confirming the near-optimality of Algorithm \ref{Algo}. We also observe that the additional penalty reduction offered by $\mathcal{P}$ over $\hat{\mathcal{P}}$ amounts to roughly $51\%$ at a mean task time of $1$s and $14\%$ at a mean task time of $5$s, thus demonstrating the advantage of introducing the offset.
\begin{figure}[t]
\centering
\begin{subfigure}{0.8\linewidth}
\centerline{\includegraphics[width=\linewidth]{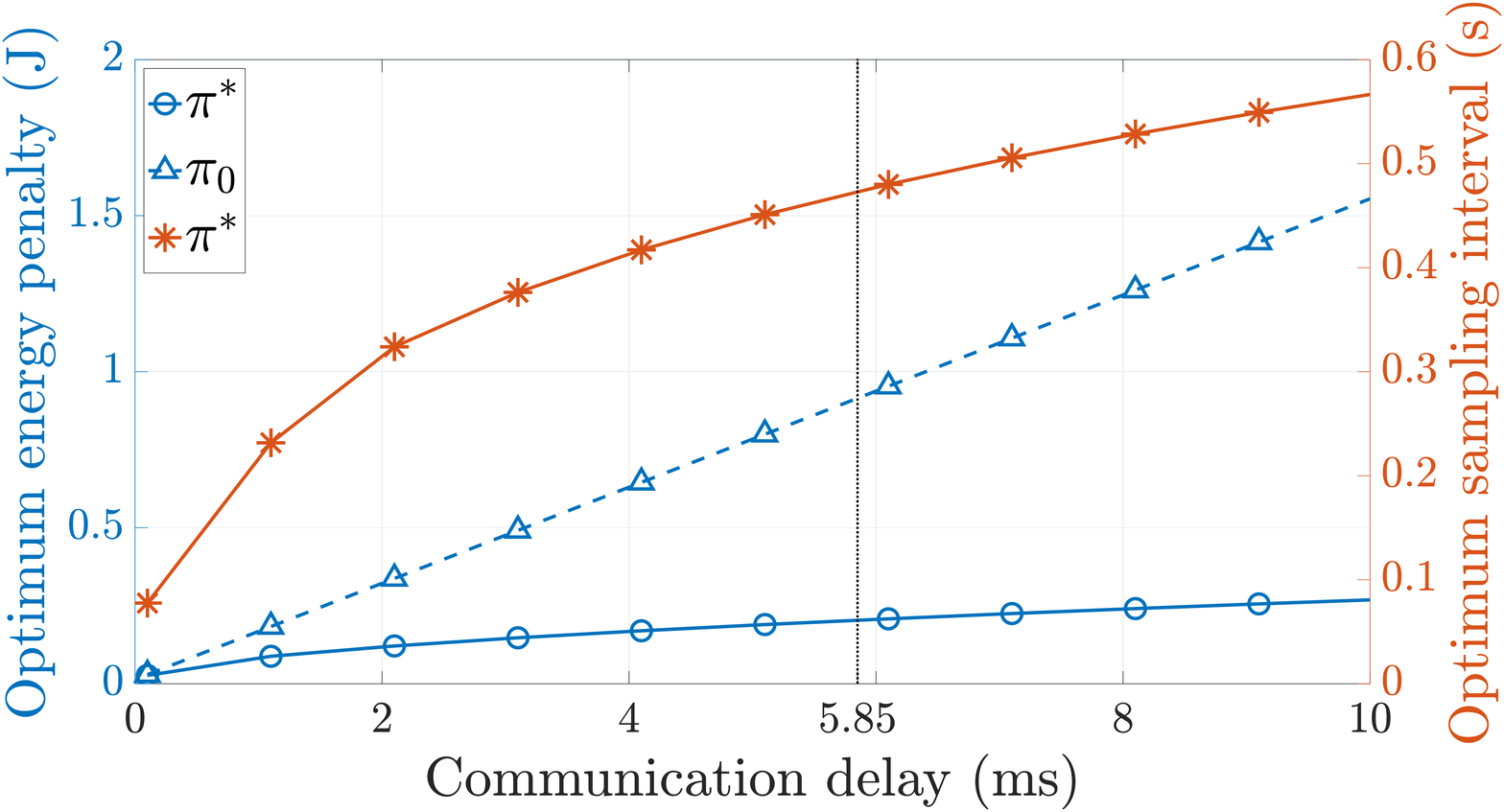}}
\caption{$\mathcal{E}^*,\,\ts^*$ vs. $\tau_\text{c}$ for fixed $P_\text{c},P_0$.}
\label{fig:r4_OptimumVsTau_c.eps}
\end{subfigure}
\begin{subfigure}{0.8\linewidth}
\centerline{\includegraphics[width=\linewidth]{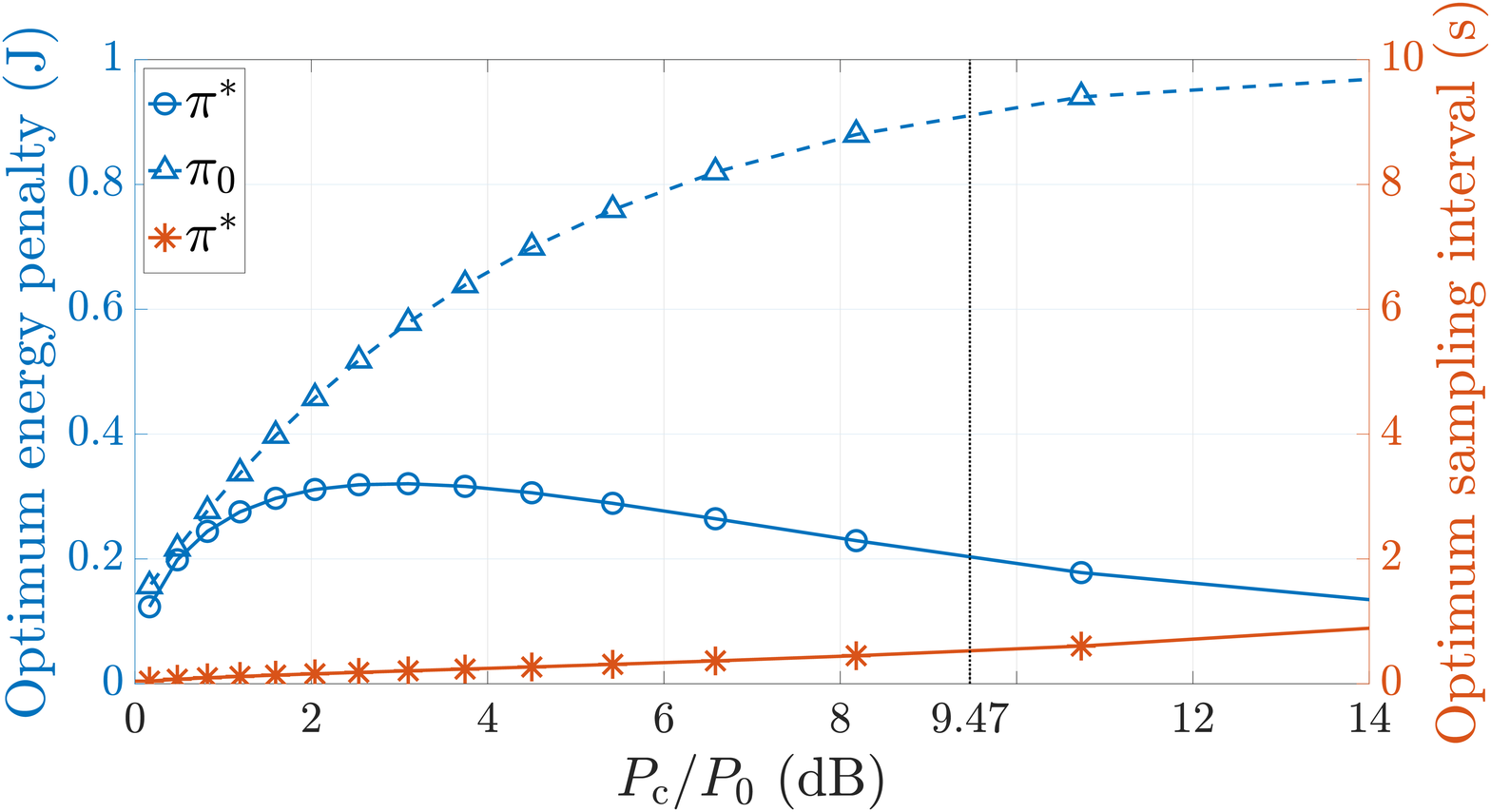}}
\caption{$\mathcal{E}^*,\,\ts^*$ vs. log$_{10}(\nicefrac{P_\text{c}}{P_0})$ for fixed $\tau_\text{c},P_\text{c}$.}
\label{fig:r6_OptimumVsP0.eps}
\end{subfigure}
\caption{Variation of optimum sampling interval $\ts^*$ and the corresponding energy penalty $\mathcal{E}^*$ for various sampling policies with variation in communication delay $\tau_\text{c}$ and idle power $P_0$ for the VAS. The vertical lines denotes $\tau_\mathrm{c}$ and log$_{10}(\nicefrac{P_\text{c}}{P_0})$ of the VAS from TABLE \ref{tab:Example}.}
\label{fig:rayleigh_optimumVsParams}
\end{figure}

\subsection{Variation of $\mathcal{E}$ with System Parameters}
Recall that the optimum sampling interval depends only on the ratio $\nicefrac{\beta}{\alpha}=\tau_{\text{c}}^{-1} (\tfrac{P_\text{c}}{P_0}-1)^{-1}$. Fig. \ref{fig:r4_OptimumVsTau_c.eps} and Fig. \ref{fig:r6_OptimumVsP0.eps} illustrates the variation of the energy penalty of the VAS with respect to the system parameters $\tau_\text{c}$ and $\nicefrac{P_\text{c}}{P_0}$. Observe from the figures that the energy penalty obtained by solving $\mathcal{P}$ varies much less with the system parameters than the baseline policy. This shows the relative independence of the optimum energy penalty with the system parameters.
Since these results are qualitatively similar for the CPS, we omit them due to space constraints. 

For fixed power, communication delay $\tau_\text{c}$ is dictated by the data size per sample and the transmission schemes. Fig. \ref{fig:r4_OptimumVsTau_c.eps} illustrate the dependency of $\mathcal{E}$ with respect to these aspects of communication. We can see that a larger communication delay results in larger $\ts^*$. However, the corresponding increase in the incurred penalty is much smaller for the proposed optimum. In Fig. \ref{fig:r6_OptimumVsP0.eps} we show the change in $\ts^*$ and $\mathcal{E}$ with respect to $\nicefrac{P_\text{c}}{P_0}$ for constant $\tau_\text{c}$ and ${P_\text{c}}$. We can see that the optimum sampling interval increases with an increase in the difference between the communication power and idle power. However, a more interesting observation is the corresponding change in the energy penalty. With $P_0<<P_\text{c}$, we obtain a significant penalty reduction by using $\pi^*$. On the other hand, as the $P_0$ approaches $P_\text{c}$ (i.e., 0dB in the figure), the penalty reduction goes to zero.

With advancements in semiconductor technology that reduces $P_0$ and the emergence of communication standards like the mmWave, $\nicefrac{P_\text{c}}{P_0}$ ratio is expected to increase from the 9.47dB that we modelled. Similarly, increasing image processing capabilities increase $\tau_\text{c}$ as well. As a result, it is evident from the figures that the gain offered by the proposed algorithm is more likely to increase in the future.

\section{Conclusion and Future Scope}\label{sec:conclusion}
We considered an edge-based feedback system that captures essential events via sampling and proposed an optimisation framework with which the sampling interval that minimises the energy consumption can be computed. Apart from the generic approach to solve the optimisation problem for an arbitrary task time distribution, we also considered two particular examples of interests -- a VAS (video analytics system) and a CPS (cyber physical system).
The TTE (time to event) of the VAS follows a Rayleigh distribution while that of the CPS follows an exponential distribution. These two systems are used to illustrate the behaviour of the variable components of energy. 
We discussed the energy savings enabled by the optimum sampling interval, the benefit provided by offsetting the first sample and the near-optimality of the proposed algorithm. We also discussed the dependency of the optimum to the system parameters like communication delay and idle power. Finally, we also discussed the additional energy expended as a result of a computation error and concluded that this expense is relatively steeper for a negative error.

From a mathematical perspective, the basic difference between a CPS and a VAS is the underlying TTE distribution. However, from a design perspective, the one important distinction is the benefit of adding the offset $\delta$. 
As explained before, offset is particularly useful when the statistical mode of the TTE distribution is large enough and the variance of the distribution is small enough so that the probability of events occurring in the early stages of a monitoring cycle is small.
As a result, offset is not useful for an application with an exponentially distributed TTE (see Proposition \ref{prop:OffsetIsIrrelevant}) but is useful when TTE is Rayleigh distributed. 
However, there are other distributions where the offset is much more useful and the proposed solution can provide a higher performance improvement. For instance, the exponentially modified Gaussian distribution can model the human response times in some similar applications. 

In this work, we considered the problem with only one single category of event. However in a CPS, systems might need to capture different categories of failures, and thus might have different severity or costs associated with them. Similarly in a VAS, the responsiveness of the system need to be considered and any minimisation of energy usage that affects the responsiveness over a certain limit should be avoided. Such demands for an additional characterisation that adds a variable weight to the required energy savings is the primary direction for the future scope of this work.

\bibliographystyle{IEEEtran}
\bibliography{00_main.bib}

\begin{thebibliography}{10}
\providecommand{\url}[1]{#1}
\csname url@samestyle\endcsname
\providecommand{\newblock}{\relax}
\providecommand{\bibinfo}[2]{#2}
\providecommand{\BIBentrySTDinterwordspacing}{\spaceskip=0pt\relax}
\providecommand{\BIBentryALTinterwordstretchfactor}{4}
\providecommand{\BIBentryALTinterwordspacing}{\spaceskip=\fontdimen2\font plus
\BIBentryALTinterwordstretchfactor\fontdimen3\font minus
  \fontdimen4\font\relax}
\providecommand{\BIBforeignlanguage}[2]{{%
\expandafter\ifx\csname l@#1\endcsname\relax
\typeout{** WARNING: IEEEtran.bst: No hyphenation pattern has been}%
\typeout{** loaded for the language `#1'. Using the pattern for}%
\typeout{** the default language instead.}%
\else
\language=\csname l@#1\endcsname
\fi
#2}}
\providecommand{\BIBdecl}{\relax}
\BIBdecl

\bibitem{Shewhart}
W.~A. Shewhart, ``The application of statistics as an aid in maintaining
  quality of a manufactured product,'' \emph{Journal of the American
  Statistical Association}, vol.~20, no. 152, pp. 546--548, 1925.

\bibitem{Gertler}
J.~Gertler, ``Survey of model-based failure detection and isolation in complex
  plants,'' \emph{IEEE Control Systems Magazine}, vol.~8, no.~6, pp. 3--11,
  1988.

\bibitem{page}
E.~S. Page, ``Continuous inspection schemes,'' \emph{Biometrika}, vol.~41, no.
  1/2, pp. 100--115, 1954.

\bibitem{veeravalli2012quickest}
V.~V. Veeravalli and T.~Banerjee, ``Quickest change detection,'' 2012.

\bibitem{dataMining}
A.~Mahmood, K.~Shi, S.~Khatoon, and M.~Xiao, ``Data mining techniques for
  wireless sensor networks: A survey,'' \emph{International Journal of
  Distributed Sensor Networks}, vol.~9, no.~7, p. 406316, 2013.

\bibitem{ananthanarayananreal}
G.~Ananthanarayanan, P.~Bahl, P.~Bodík, K.~Chintalapudi, M.~Philipose,
  L.~Ravindranath, and S.~Sinha, ``Real-time video analytics: The killer app
  for edge computing,'' \emph{Computer}, vol.~50, no.~10, pp. 58--67, 2017.

\bibitem{edgebox}
B.~Luo, S.~Tan, Z.~Yu, and W.~Shi, ``Edgebox: Live edge video analytics for
  near real-time event detection,'' in \emph{2018 IEEE/ACM Symposium on Edge
  Computing (SEC)}, 2018, pp. 347--348.

\bibitem{videoSurveillanceSurvey}
Y.~Ye, S.~Ci, A.~K. Katsaggelos, Y.~Liu, and Y.~Qian, ``Wireless video
  surveillance: A survey,'' \emph{IEEE Access}, vol.~1, pp. 646--660, 2013.

\bibitem{topology}
X.~Wang, S.~Wang, and D.~Bi, ``Distributed visual-target-surveillance system in
  wireless sensor networks,'' \emph{IEEE Transactions on Systems, Man, and
  Cybernetics, Part B (Cybernetics)}, vol.~39, no.~5, pp. 1134--1146, 2009.

\bibitem{VidCoding}
H.~Wang, F.~Zhai, Y.~Eisenberg, and A.~K. Katsaggelos, ``Cost-distortion
  optimized unequal error protection for object-based video communications,''
  \emph{IEEE Transactions on Circuits and Systems for Video Technology},
  vol.~15, no.~12, pp. 1505--1516, 2005.

\bibitem{NodeCooperation}
S.~Cui and A.~J. Goldsmith, ``Cross-layer optimization of sensor networks based
  on cooperative mimo techniques with rate adaptation,'' in \emph{IEEE 6th
  Workshop on Signal Processing Advances in Wireless Communications}, 2005, pp.
  960--964.

\bibitem{energyefficientobjectdeteciton}
M.~Casares and S.~Velipasalar, ``Adaptive methodologies for energy-efficient
  object detection and tracking with battery-powered embedded smart cameras,''
  \emph{IEEE Transactions on Circuits and Systems for Video Technology},
  vol.~21, no.~10, pp. 1438--1452, 2011.

\bibitem{smartSleeping}
J.~A. Fuemmeler and V.~V. Veeravalli, \emph{Smart Sleeping Policies for
  Energy-Efficient Tracking in Sensor Networks}.\hskip 1em plus 0.5em minus
  0.4em\relax Boston, MA: Springer US, 2008, pp. 267--287.

\bibitem{offloading_MEC1}
B.~Shi, J.~Yang, Z.~Huang, and P.~Hui, ``Offloading guidelines for augmented
  reality applications on wearable devices,'' in \emph{Proc. of ACM
  International Conference on Multimedia}, 2015, p. 1271–1274.

\bibitem{offloading_MEC2}
R.~{Kemp}, N.~{Palmer}, T.~{Kielmann}, F.~{Seinstra}, N.~{Drost}, J.~{Maassen},
  and H.~{Bal}, ``{eyeDentify}: Multimedia cyber foraging from a smartphone,''
  in \emph{Proc. IEEE International Symposium on Multimedia}, 2009, pp.
  392--399.

\bibitem{survey}
Y.~{Mao}, C.~{You}, J.~{Zhang}, K.~{Huang}, and K.~B. {Letaief}, ``A survey on
  mobile edge computing: The communication perspective,'' \emph{IEEE
  Communications Surveys Tutorials}, vol.~19, no.~4, pp. 2322--2358, 2017.

\bibitem{JayaLiang_SemiOnlineAlgos}
J.~P. {Champati} and B.~{Liang}, ``Semi-online algorithms for computational
  task offloading with communication delay,'' \emph{IEEE Transactions on
  Parallel and Distributed Systems}, vol.~28, no.~4, pp. 1189--1201, 2017.

\bibitem{JayaLiang_singleRestart}
------, ``Single restart with time stamps for parallel task processing with
  known and unknown processors,'' \emph{IEEE Transactions on Parallel and
  Distributed Systems}, vol.~31, no.~1, pp. 187--200, 2020.

\bibitem{offloading_binary1}
W.~{Zhang}, Y.~{Wen}, K.~{Guan}, D.~{Kilper}, H.~{Luo}, and D.~O. {Wu},
  ``Energy-optimal mobile cloud computing under stochastic wireless channel,''
  \emph{IEEE Transactions on Wireless Communications}, vol.~12, no.~9, pp.
  4569--4581, 2013.

\bibitem{offloading_binary2}
C.~{You}, K.~{Huang}, and H.~{Chae}, ``Energy efficient mobile cloud computing
  powered by wireless energy transfer,'' \emph{IEEE Journal on Selected Areas
  in Communications}, vol.~34, no.~5, pp. 1757--1771, 2016.

\bibitem{offloading_binary3}
K.~{Kumar} and Y.~{Lu}, ``Cloud computing for mobile users: Can offloading
  computation save energy?'' \emph{Computer}, vol.~43, no.~4, pp. 51--56, 2010.

\bibitem{offloading_binary4}
S.~{Barbarossa}, S.~{Sardellitti}, and P.~{Di Lorenzo}, ``Communicating while
  computing: Distributed mobile cloud computing over 5{G} heterogeneous
  networks,'' \emph{IEEE Signal Processing Magazine}, vol.~31, no.~6, pp.
  45--55, 2014.

\bibitem{energyAwareOffloading}
M.~Zhao, J.-J. Yu, W.-T. Li, D.~Liu, S.~Yao, W.~Feng, C.~She, and T.~Q.~S.
  Quek, ``Energy-aware offloading in time-sensitive networks with mobile edge
  computing,'' \emph{CoRR}, vol. abs/2003.12719, 2020.

\bibitem{offloading_partial1}
S.~E. {Mahmoodi}, R.~N. {Uma}, and K.~P. {Subbalakshmi}, ``Optimal joint
  scheduling and cloud offloading for mobile applications,'' \emph{IEEE
  Transactions on Cloud Computing}, vol.~7, no.~2, pp. 301--313, 2019.

\bibitem{offloading_partial2}
W.~{Zhang}, Y.~{Wen}, and D.~O. {Wu}, ``Collaborative task execution in mobile
  cloud computing under a stochastic wireless channel,'' \emph{IEEE
  Transactions on Wireless Communications}, vol.~14, no.~1, pp. 81--93, 2015.

\bibitem{offloading_stochastic}
D.~{Huang}, P.~{Wang}, and D.~{Niyato}, ``A dynamic offloading algorithm for
  mobile computing,'' \emph{IEEE Transactions on Wireless Communications},
  vol.~11, no.~6, pp. 1991--1995, 2012.

\bibitem{OptSampling_iccSage}
V.~N. Moothedath, J.~P. Champati, and J.~Gross, ``Energy-optimal sampling of
  edge-based feedback systems,'' in \emph{2021 IEEE International Conference on
  Communications Workshops (ICC Workshops)}, 2021, pp. 1--6.

\bibitem{TowardsWCA}
K.~Ha, Z.~Chen, W.~Hu, W.~Richter, P.~Pillai, and M.~Satyanarayanan, ``Towards
  wearable cognitive assistance,'' in \emph{Proc. of the 12th Annual
  International Conference on Mobile Systems, Applications, and
  Services}.\hskip 1em plus 0.5em minus 0.4em\relax Association for Computing
  Machinery, 2014, p. 68–81.

\bibitem{ChenZhuoHu}
Z.~Chen, W.~Hu, J.~Wang, S.~Zhao, B.~Amos, G.~Wu, K.~Ha, K.~Elgazzar,
  P.~Pillai, R.~Klatzky, D.~Siewiorek, and M.~Satyanarayanan, ``An empirical
  study of latency in an emerging class of edge computing applications for
  wearable cognitive assistance,'' in \emph{Proc. ACM/IEEE Symposium on Edge
  Computing}, 2017.

\bibitem{munoz2020impact}
M.~O. Muñoz, R.~Klatzky, J.~Wang, P.~Pillai, M.~Satyanarayanan, and J.~Gross,
  ``Impact of delayed response on wearable cognitive assistance,'' \emph{CoRR},
  vol. abs/2011.02555, 2020.

\bibitem{boyd}
S.~Boyd and L.~Vandenberghe, \emph{Convex Optimization}.\hskip 1em plus 0.5em
  minus 0.4em\relax Cambridge University Press, 2004.

\bibitem{whittaker_watson_1996}
E.~T. Whittaker and G.~N. Watson, \emph{A Course of Modern Analysis}, 4th~ed.,
  ser. Cambridge Mathematical Library.\hskip 1em plus 0.5em minus 0.4em\relax
  Cambridge University Press, 1996.

\bibitem{PowConsumptionAssessmentinWSN}
A.~Moschitta and I.~Neri, ``Power consumption assessment in wireless sensor
  networks,'' \emph{ICT-Energy-Concepts Towards Zero-Power Information and
  Communication Technology}, 02 2014.

\bibitem{googleGlass}
R.~LiKamWa, Z.~Wang, A.~Carroll, F.~X. Lin, and L.~Zhong, ``Draining our glass:
  An energy and heat characterization of google glass,'' in \emph{Proceedings
  of 5th Asia-Pacific Workshop on Systems}, ser. APSys '14.\hskip 1em plus
  0.5em minus 0.4em\relax New York, NY, USA: Association for Computing
  Machinery, 2014.

\end{thebibliography}

%



\end{document}